\documentclass[11pt,a4paper]{article}
\makeatletter
\RequirePackage[utf8]{inputenc}
\RequirePackage[T1]{fontenc}
\RequirePackage[english]{babel}
\RequirePackage{lmodern,microtype}
\RequirePackage[a4paper,top=2.5cm,bottom=2.5cm,left=2.7cm,right=2.7cm,footskip=1.2cm]{geometry}
\RequirePackage{amsmath,amssymb,amsthm,mathtools,bm}
\RequirePackage{graphicx,booktabs,array,tabularx,multirow,makecell,longtable}
\RequirePackage{caption,subcaption,xcolor,enumitem,csquotes,tikz,siunitx}
\RequirePackage{needspace}
\usetikzlibrary{arrows.meta,positioning,shapes.geometric}
\RequirePackage[backend=biber,style=authoryear-comp,natbib=true,url=true,doi=true,isbn=false,maxbibnames=99,maxcitenames=2,uniquelist=false,sorting=nyt]{biblatex}
\RequirePackage[colorlinks=true,linkcolor=black,citecolor=black,urlcolor=black]{hyperref}
\RequirePackage[noabbrev,capitalize]{cleveref}
\newcolumntype{L}[1]{>{\raggedright\arraybackslash}p{#1}}
\newcolumntype{Y}{>{\raggedright\arraybackslash}X}
\setlist{nosep,leftmargin=*}
\let\ffraStandardSection\section
\renewcommand{\section}{\Needspace{7\baselineskip}\ffraStandardSection}
\theoremstyle{plain}
\newtheorem{proposition}{Proposition}
\newtheorem{theorem}[proposition]{Theorem}
\newtheorem{lemma}[proposition]{Lemma}
\newtheorem{corollary}[proposition]{Corollary}
\theoremstyle{definition}
\newtheorem{definition}{Definition}
\newtheorem{assumption}{Assumption}

\theoremstyle{remark}
\newcommand{\ffraseries}{ForesightFlow Research}
\newcommand{\ffraauthor}{Maksym Nechepurenko}
\newcommand{\ffraaffiliation}{Research Department of Devnull FZCO, Dubai, UAE}
\newcommand{\ffraemail}{maksym@devnull.ae}
\newcommand{\ffraorcid}{0000-0002-9515-8841}
\newcommand{\ffraversion}{r0.0.0}
\newcommand{\ffradate}{\today}
\newcommand{\ffrastatus}{}
\newcommand{\ffraSetup}[1]{\setkeys{ffra}{#1}}
\RequirePackage{keyval}
\define@key{ffra}{series}{\renewcommand{\ffraseries}{#1}}
\define@key{ffra}{author}{\renewcommand{\ffraauthor}{#1}}
\define@key{ffra}{affiliation}{\renewcommand{\ffraaffiliation}{#1}}
\define@key{ffra}{email}{\renewcommand{\ffraemail}{#1}}
\define@key{ffra}{orcid}{\renewcommand{\ffraorcid}{#1}}
\define@key{ffra}{version}{\renewcommand{\ffraversion}{#1}}
\define@key{ffra}{date}{\renewcommand{\ffradate}{#1}}
\define@key{ffra}{status}{\renewcommand{\ffrastatus}{#1}}
\renewcommand{\maketitle}{\begin{center}\small\textsc{\ffraseries}\par\vspace{0.75em}{\LARGE\bfseries \@title\par}\vspace{0.8em}{\large \ffraauthor\par}\vspace{0.25em}{\small \ffraaffiliation\par\texttt{\ffraemail}\quad ORCID \ffraorcid\par\vspace{0.25em}\ffradate\quad\textbar\quad Version \ffraversion\par}\end{center}\vspace{0.5em}}
\newcommand{\ffraKeywords}[1]{\noindent\textbf{Keywords:} #1\par}
\newcommand{\ffraJEL}[1]{\noindent\textbf{JEL:} #1\par}
\newcommand{\ffraEvidence}[1]{\PackageWarning{foresightflow_research_article}{\string\ffraEvidence is deprecated and renders no text in FFRA-v1.0.3}}

\definecolor{seriesblue}{RGB}{31,78,121}
\definecolor{seriesgray}{RGB}{242,244,247}
\definecolor{serieslight}{RGB}{248,249,251}
\definecolor{seriesdark}{RGB}{52,58,64}
\definecolor{seriesorange}{RGB}{180,96,33}
\usetikzlibrary{fit,calc,decorations.pathreplacing,matrix,backgrounds,patterns}
\tikzset{seriesbox/.style={draw=black!70,rounded corners=2pt,align=center,inner sep=5pt,minimum height=9mm,fill=seriesgray},seriesarrow/.style={-{Latex[length=2.2mm]},thick,draw=black!75}}
\makeatother
\newcommand{\ObsExact}{\mathsf{E}}
\newcommand{\ObsInterval}{\mathsf{I}}
\newcommand{\ObsSnapshot}{\mathsf{S}}
\newcommand{\ObsProxy}{\mathsf{Prx}}
\newcommand{\ObsUnmeasured}{\mathsf{U}}
\newcommand{\ObsConflict}{\mathsf{X}}

\newtheorem{hypothesis}{Hypothesis}

\newtheorem{designprinciple}{Design Principle}
\crefname{designprinciple}{Design Principle}{Design Principles}
\Crefname{designprinciple}{Design Principle}{Design Principles}
\crefname{empiricalfinding}{Empirical Finding}{Empirical Findings}
\Crefname{empiricalfinding}{Empirical Finding}{Empirical Findings}
\usepackage{pgfplots}
\pgfplotsset{compat=1.18}
\usepackage{adjustbox}
\usepackage{etoolbox}
\usepackage{seqsplit}
\hypersetup{
  pdftitle={Resolution Is Not Settlement, Part II: Protocol Finality and Observed Redemption on Polymarket},
  pdfauthor={Maksym Nechepurenko},
  pdfsubject={Prediction markets; protocol finality; Conditional Tokens; observed redemption; Polymarket},
  pdfkeywords={prediction markets, Polymarket, Conditional Tokens, protocol finality, payout vectors, redemption, survival analysis}
}

\newcommand{\E}{\mathbb E}

\newcommand{\ind}[1]{\mathbf 1\!\left[#1\right]}

\newcommand{\code}[1]{\texttt{#1}}
\newcommand{\eventcode}[1]{\texttt{\seqsplit{#1}}}

\newcommand{\tprepare}{t_{\mathrm{prepare}}}
\newcommand{\toracle}{t_{\mathrm{oracle\text{-}final}}}
\newcommand{\tconsume}{t_{\mathrm{adapter\text{-}consume}}}
\newcommand{\tadapter}{t_{\mathrm{adapter\text{-}terminal}}}
\newcommand{\tresolve}{t_{\mathrm{protocol\text{-}final}}}
\newcommand{\tredeemable}{t_{\mathrm{redeemable}}}
\newcommand{\tredeem}{t_{\mathrm{first\text{-}redeem}}}
\newcommand{\tredeempos}{t_{\mathrm{first\text{-}positive\text{-}redeem}}}
\newcommand{\tstar}{T^{\star}}

\newcommand{\Yvec}{\mathbf Y}
\newcommand{\pivec}{\boldsymbol\pi}

\newcommand{\Cohort}{\mathcal Q^{\star}}
\newcommand{\Conditions}{\mathcal C}
\newcommand{\Redemptions}{\mathcal R}
\newcommand{\km}{\widehat S_{\mathrm{KM}}}

\newcommand{\occupancy}{\Omega}
\newcommand{\sourceobs}{\mathcal O}
\newcommand{\canon}{\mathcal E}
\newcommand{\valuegate}{\mathrm{G\text{-}VALUE}}

\setlist[description]{style=nextline,leftmargin=0pt,labelindent=0pt}

\crefname{designprinciple}{Design Principle}{Design Principles}
\Crefname{designprinciple}{Design Principle}{Design Principles}
\crefname{empiricalfinding}{Empirical Finding}{Empirical Findings}
\Crefname{empiricalfinding}{Empirical Finding}{Empirical Findings}

\ffraSetup{series={ForesightFlow Research \textperiodcentered{} Event-Linked Perpetuals \textperiodcentered{} Paper 5, Part II},author={Maksym Nechepurenko},affiliation={Research Department of Devnull FZCO, Dubai, UAE},email={maksym@devnull.ae},orcid={0000-0002-9515-8841},version={r0.6.1},date={August 2026},status={}}

\title{Resolution Is Not Settlement, Part II:\\
Protocol Finality and Observed Redemption on Polymarket}
\author{\SeriesAuthor\thanks{\SeriesAffiliation}}
\date{\SeriesDate}

\begin{document}
\maketitle
\begin{abstract}
An Oracle result is not yet a protocol payout, a redeemable position is not yet collateral in a holder's account, and a redemption event is not a complete measure of economic entitlement. This companion paper develops an event-sourced framework for Polymarket conditions from preparation through protocol finality and observed holder realization.

The empirical design uses three Conditional Tokens Framework event families derived from a pinned contract application binary interface (ABI): \eventcode{ConditionPreparation}, \eventcode{ConditionResolution}, and \eventcode{PayoutRedemption}. It separates the contract-wide acquisition universe from the frozen Polymarket adapter-question cohort. The exact bridge contains 108,638 linked conditions. Of these, 99,283 have an observed protocol-resolution event by the fixed snapshot at Polygon block 90,114,204 (2026-07-12T17:11:41Z).

Among resolved exact-linked conditions, 92,158 have an observed redemption of any amount and 91,817 have an observed positive-payout redemption; condition-specific Kaplan--Meier medians from first protocol resolution are 182 and 200 seconds respectively. The exact-linked payout taxonomy contains 53,847 canonical $(0,1)$ vectors, 45,024 canonical $(1,0)$ vectors, 410 fifty-fifty vectors, two other valid vectors, and 9,355 conditions with no observed resolution. Cross-contract Oracle--adapter--protocol ordering is reported conservatively: 823 conditions have an interval-qualified terminal generation, 48 have multiple candidate generations, 91,638 have no compatible terminal generation in the frozen evidence, and 16,129 are right-censored or otherwise unevaluable.

The formal results show that Oracle finality does not identify protocol finality, protocol finality does not identify holder realization, and redemption events alone do not identify the fraction of entitlement redeemed without an independent balance-consistent entitlement denominator. The paper therefore reports observed event timing and payout realization without claiming G-VALUE, outstanding winning-token supply, or unredeemed collateral value.
\end{abstract}

\ffraKeywords{prediction markets; event-linked perpetual futures; finality; Polymarket.}
\ffraJEL{G13, G14, G18.}

\section{Introduction}
\label{sec:introduction}

A prediction market can have a final oracle result while its protocol payout vector remains unrecorded.  A payout vector can be recorded while no holder has redeemed.  A holder can be technically able to redeem and nevertheless defer or never exercise that action.  Treating these states as one timestamp conflates adjudication, protocol accounting, asset availability, and holder behavior.

The distinction is economically material whenever an event claim is financed, margined, converted, or used as collateral.  A lender cannot regard an oracle result as cash if the adapter has not consumed it, the Conditional Tokens Framework (CTF) has not recorded payouts, or the relevant claim has not been converted into collateral under an enforceable rule.  Conversely, delayed holder action after protocol finality need not indicate a protocol defect.  It can reflect transaction costs, batching, custody operations, gas conditions, zero-payout cleanup, portfolio policy, or simple inattention.  The same observed delay can therefore belong to different balance sheets and different actors.

Part~I reconstructs Polymarket's semantic and oracle layer through the adapter-terminal transition \citep{nechepurenko2026_p5a}.  It shows that technical request creation, proposal, dispute, reset, oracle settlement, and adapter consumption are distinct states.  Part~II begins at the protocol layer.  It asks when a condition is registered, when its ordered payout vector becomes the contract's redemption rule, when the first holder action appears, and which value claims remain unidentified without a complete entitlement denominator.

The downstream acquisition layer spans the pinned Conditional Tokens contract over a fixed block interval, while the scientific population remains the frozen adapter-question cohort inherited from Part~I.  This distinction is load-bearing: a contract-wide event total is a coverage and transport result, not the denominator for a Polymarket cohort rate.  Every population estimate in Part~II therefore enters only after exact protocol identity has linked an event back to the frozen cohort.

\subsection{Why protocol finality deserves a separate paper}

A single phrase such as ``the market resolved'' can refer to at least four different events:

\begin{enumerate}
  \item the oracle answer is no longer contestable through the ordinary path;
  \item the adapter has consumed that answer and reached a terminal state;
  \item the conditional-token contract has recorded payout numerators;
  \item a holder has redeemed a position into collateral.
\end{enumerate}

These events can share one block timestamp, occur in different transactions, or be separated by a long interval.  Their ordering also depends on the emitted logs.  In a standard downstream call, the adapter can invoke the protocol payout function before emitting its own terminal event.  It is therefore incorrect to force the companion papers into a rigid timestamp sequence in which ``adapter finality'' always precedes ``protocol finality.''  The correct object is an exact event graph with block, transaction, and log-order keys.

The protocol layer creates a further identification problem that does not arise at the oracle layer.  Redemption logs reveal actions that occurred.  They do not directly reveal the total quantity of winning positions that could have been redeemed.  Without a balance-consistent entitlement denominator, the event ledger supports an observed payout curve but not the fraction of all entitlement completed.  This distinction is especially important when a leveraged design attempts to use delayed redemption as a measure of protocol risk or capital lock-up.

\subsection{Companion-paper boundary}

The two parts use one frozen adapter-question cohort but different primary units.  Part~I uses the adapter-question and oracle request generation.  Part~II uses the condition, condition-resolution event, and redemption event.  The boundary is analytical rather than an assertion that the adapter-terminal log must precede the protocol-resolution log.  Both can arise within one downstream call, and their observed order is determined by canonical log indices.  \Cref{fig:p2_layers} fixes this division of responsibility.

\begin{figure}[htbp]
\centering
\resizebox{0.95\textwidth}{!}{%
\begin{tikzpicture}[
  node distance=8mm and 11mm,
  every node/.style={font=\footnotesize},
  layer/.style={seriesbox,minimum width=2.45cm,minimum height=1.0cm},
  registry/.style={seriesbox,dashed,minimum width=2.45cm,minimum height=1.0cm,fill=white},
  arr/.style={seriesarrow}
]
\node[font=\scriptsize\bfseries,seriesblue] (p1lab) at (0,1.15) {Part I evidence lane};
\node[layer,right=10mm of p1lab] (oracle) {Oracle finality};
\node[layer,right=14mm of oracle] (adapter) {Adapter-terminal\\log};

\node[font=\scriptsize\bfseries,seriesblue] (p2lab) at (0,-1.15) {Part II evidence lane};
\node[registry,right=10mm of p2lab] (prepare) {Condition\\prepared};
\node[layer,right=14mm of prepare] (protocol) {Protocol payout\\recorded};
\node[layer,right=of protocol] (redeemable) {Claim\\redeemable};
\node[layer,right=of redeemable] (redeem) {Observed holder\\redemption};

\draw[arr] (oracle) -- (adapter);
\draw[arr] (prepare) -- (protocol);
\draw[arr] (protocol) -- (redeemable);
\draw[arr] (redeemable) -- (redeem);
\draw[arr,dashed] (oracle.south east) -- (protocol.north west);
\draw[<->,thick,dashed,seriesblue] (adapter.south) -- (protocol.north);
\node[font=\scriptsize,align=center] at ($(adapter.east)!0.5!(protocol.east)+(12mm,0)$) {same transaction possible;\\actual log order measured};
\node[font=\scriptsize,align=center] at ($(oracle.south east)!0.5!(protocol.north west)+(-4mm,-1mm)$) {downstream\\consumption};
\end{tikzpicture}%
}
\caption{Companion-paper boundary.  Part I terminates at the observed adapter-terminal state; Part II studies protocol payout recording and holder action.  The division is analytical rather than a universal temporal ordering: protocol-resolution and adapter-terminal logs can occur in one downstream transaction, and canonical log indices determine their observed order.}
\label{fig:p2_layers}
\end{figure}

Condition preparation belongs to the protocol architecture but is not a finality event.  It can occur before the external event, before an oracle request becomes answerable, and before any terminal information exists.  Protocol finality begins when the payout rule is recorded.  Redeemability follows from that rule and the deployed contract state.  Redemption is later still: it is an observed holder action.

\subsection{Research questions}

The paper asks seven questions.

\begin{enumerate}
  \item How are frozen adapter-questions mapped exactly to CTF conditions, and where does attrition arise?
  \item Which event and contract identities establish preparation, protocol finality, and redemption without relying on historical stream labels or mutable application-programming-interface snapshots?
  \item How do oracle finality, adapter terminality, and \code{ConditionResolution} relate: in the same transaction, in which log order, in the same block, or later?
  \item Which payout vectors are recorded, including canonical YES/NO outcomes, neutral outcomes, and other valid or malformed vectors?
  \item How long after protocol finality does the first observed redemption occur, and how much right censoring remains at the fixed follow-up snapshot?
  \item How does first positive-payout redemption differ from any redemption, including zero-payout cleanup before a later positive event?
  \item Which quantities are identified by protocol events and which require a separate balance, supply, or entitlement reconstruction?
\end{enumerate}

\subsection{Contributions}

The paper makes seven contributions.

First, it provides an exact event-sourced representation of protocol finality that is separate from oracle adjudication.  Event families are selected from application-binary-interface (ABI)-derived topic hashes and canonical chain identities rather than mutable stream labels or terminal market snapshots.

Second, it defines an exact mapping and attrition design from adapter-question to condition, condition resolution, and redemption.  Title similarity, fuzzy text matching, and nearest-time fallback are excluded from canonical relations.

Third, it develops a payout-vector taxonomy that preserves ordered integer numerators and separates raw protocol representation from normalized economic payout class.

Fourth, it decomposes downstream timing into oracle-to-protocol, oracle-to-adapter, signed adapter--protocol emission order, protocol-to-first-redemption, and protocol-to-first-positive-redemption intervals.  A zero wall-clock gap within one transaction does not erase the exact state order.

Fifth, it specifies right-censored and multistate holder-action outcomes.  A zero-payout redemption ends the ``any redemption'' endpoint but does not preclude a later positive-payout redemption by the same or another holder.

Sixth, it restores the economics of the finality gap without equating all observed discounts or delays with oracle risk.  The paper distinguishes protocol inability, operational delay, holder optionality, and a missing value denominator.

Seventh, it establishes a strict observation-versus-entitlement boundary.  Observed redemption events identify realized actions and paid collateral, but not the fraction of all redeemable entitlement completed unless an independent entitlement denominator is reconstructed.

\subsection{Evidence discipline}

The paper distinguishes five evidence layers:

\begin{enumerate}
  \item exact contract events and canonical chain identities;
  \item deterministic linkage and event-sourced reconstruction;
  \item protocol state and ordered payout rules;
  \item right-censored holder-action outcomes;
  \item unmeasured value denominators and off-chain economic identities.
\end{enumerate}

A large event count does not upgrade incomplete range coverage.  A protocol-resolution event does not prove holder cash realization.  A redemption address is not a beneficial owner.  An observed cumulative payout amount is not divided by an unobserved entitlement denominator.  Every rate, hazard, and duration reports its applicable population or risk set.

\subsection{Position in the wider programme}

Part~II supplies the downstream state variables required by three subsequent papers.  Paper~6 uses condition identity, protocol finality, and delivery feasibility to analyze state-triggered margin and perpetual-to-conditional-token conversion.  Paper~8 uses the distinction between mechanism finality, venue/protocol finality, and cash realization in its functional comparison with Kalshi.  Paper~9 uses leg-specific protocol-finality and redemption clocks to model asynchronous event spreads.  This paper does not duplicate their mechanism-design or cross-venue analyses; it supplies the empirical protocol layer on which they depend.

\subsection{Roadmap}

\Cref{sec:related_work} positions the paper.  \Cref{sec:protocol_architecture} reconstructs the downstream architecture.  \Cref{sec:formal_framework} develops the condition, payout, and redemption model.  \Cref{sec:data_methodology} fixes the population, target-event registry, exact linkage, and validity gates.  \Cref{sec:estimands_design} defines protocol-finality, payout, survival, and observed-realization estimands.  \Cref{sec:economics_gap} restores the economics of the protocol-to-cash gap.  \Cref{sec:empirical_contract} gives the registered empirical design and analysis order, and \Cref{sec:results} presents the resulting numerical evidence.  The remaining sections state leveraged-design implications, reproducibility, limitations, and conclusions.  The appendices contain the operational dictionary, mapping schema, gate matrix, survival protocol, proofs, and reproducibility and evidence-lineage material.

\section{Related Work}
\label{sec:related_work}

The paper lies at the intersection of prediction-market design, oracle-mediated adjudication, conditional-token accounting, financial-market-infrastructure finality, and event-history analysis.

\subsection{Prediction markets and event claims}

Prediction-market research has traditionally focused on price interpretation, calibration, and information aggregation \citep{wolfers_zitzewitz_2004,manski_2006,hanson_2003}.  Those questions concern what prices mean before an event is resolved.  The present paper concerns what the protocol does after an upstream outcome has been adjudicated.  The distinction matters because a correct forecast price, a final oracle result, a recorded payout rule, and cash in a holder account are not the same object.

The companion programme develops the broader instrument context.  Paper~1 establishes why ordinary crypto-perpetual mechanics do not transfer cleanly to bounded event claims; Paper~2 provides the variant taxonomy; Paper~3 separates price, outcome, resolution, and informed-trading risks; Paper~4 states the observability limits of public fill-side data; and Part~I of the present study reconstructs Polymarket's semantic and oracle layer \citep{nechepurenko2026_p1,nechepurenko2026_p2,nechepurenko2026_p3,nechepurenko2026_p4,nechepurenko2026_p5a}.  Part~II takes the finality vocabulary downstream to protocol accounting and holder action.

\subsection{Oracles and stage separation}

Oracle research studies how off-chain facts enter on-chain systems, how proposals and disputes are organized, and how governance affects the integrity and latency of adjudication \citep{eskandari_oracles_2021,muehlberger_oracle_patterns_2020,adler_astraea_2018,kota_ai_oracles_2026}.  These works motivate the separation between evidence, proposal, challenge, and final answer.  They do not by themselves identify when a downstream token contract records a payout vector or when a holder realizes collateral.

Part~I shows empirically that technical request creation, first proposal, dispute, reset, oracle settlement, and adapter consumption are distinct clocks.  Part~II preserves that separation and adds protocol payout recording, redeemability, and observed redemption.  The object is therefore not a single ``resolution timestamp'' but a path through multiple institutional and contract layers.

\subsection{Conditional tokens and protocol accounting}

Conditional tokens provide a composable mechanism for preparing conditions, constructing outcome positions, recording payout numerators, and redeeming positions against collateral \citep{gnosis_conditional_tokens_2020,polymarket_ctf_overview_2026,polymarket_positions_tokens_2026}.  The complete-set construction implies a conservation structure: complementary positions are minted against collateral, can be merged before finality, and are redeemed according to the recorded payout vector afterward.  The exact outstanding entitlement, however, depends on the full position and collateral history, not merely on lifetime trade notional or the count of redemption events.

This paper uses the contract event ledger to identify protocol states and observed holder actions.  It deliberately does not assume that the event ledger alone reconstructs all outstanding balances.  The value-denominator problem is treated as a separate gate rather than hidden inside a normalization.

\subsection{Financial-market-infrastructure finality}

The distinction between trade agreement, settlement finality, asset availability, and end-user use resembles the discipline applied in conventional financial-market infrastructures \citep{cpss_iosco_2012}.  The analogy is limited: CTF conditions are smart-contract objects rather than conventional securities-settlement instructions.  Nevertheless, the core accounting principle transfers.  An upstream decision is not cash, and a technically available asset is not necessarily realized value for every participant.

The literature on market and funding liquidity also motivates the economics of delayed realization \citep{brunnermeier_pedersen_2009,duffie_garleanu_pedersen_2005}.  A redeemable claim can trade or be valued below its eventual payout because of capital lock-up, execution cost, uncertainty, or segmentation.  This paper does not impose one equilibrium model of that wedge.  It identifies the states and durations required for such models to be tested.

\subsection{Event history, censoring, and multistate analysis}

First redemption is a time-to-event outcome observed only after protocol finality.  Conditions without a qualifying redemption by the frozen snapshot are right-censored rather than assigned an infinite delay.  Kaplan--Meier, interval-censoring, and multistate methods provide the relevant statistical language \citep{kaplan_meier_1958,turnbull_1976,aalen_johansen_1978,andersen_borgan_gill_keiding_1993,fine_gray_1999}.  The application here requires two additional precautions.  First, a zero-payout redemption and a positive-payout redemption are ordered outcomes rather than mutually exclusive terminal causes.  Second, same-transaction events can have zero wall-clock separation while retaining exact log-order differences.

\subsection{Boundary with Papers~6, 8, and 9}

Paper~6 asks how an event-linked derivative should react to finality states and whether a synthetic position can be converted into funded conditional-token exposure.  Paper~8 compares the functional finality architecture of Polymarket and Kalshi.  Paper~9 studies asynchronous finality in multi-leg event spreads.  The present paper does not solve those design problems.  It defines and measures the downstream protocol states they require.

\section{Protocol Architecture: Conditions, Payouts, and Redemption}
\label{sec:protocol_architecture}

This section reconstructs the downstream mechanism from the pinned Conditional Tokens source, adapter source, and current public documentation.  It separates documentary description, source-code capability, deployed configuration, and observed historical use.  No claim is promoted from one layer to another without evidence.

\subsection{Evidence layers}

The protocol analysis uses four evidence layers.

\begin{enumerate}
  \item \textbf{Documentary description:} what the venue and protocol documentation state about preparation, resolution, position tokens, and redemption.
  \item \textbf{Source-code capability:} what the pinned contract and adapter revisions permit, including branches that may never have been exercised.
  \item \textbf{Deployed configuration:} the contract addresses, bytecode, dependencies, and ABI that applied to the frozen historical interval.
  \item \textbf{Observed use:} the actual events, calls, and canonical chain identities in the fixed snapshot.
\end{enumerate}

A current interface state is not projected backward into history.  A source branch establishes capability but not use.  A stream label does not override the actual event topic.  These distinctions became especially important during the construction of the protocol dataset, because historical transport labels and final materialized tables did not always preserve the complete semantic envelope.  The publication layer therefore derives event identity from the pinned ABI and full topics/data rather than trusting legacy names.

\begin{table}[htbp]
\centering
\caption{Evidence layers used in the protocol-finality analysis.  A stronger claim requires evidence from the corresponding layer rather than an inference from a weaker layer.}
\label{tab:evidence_layers}
\small
\begin{tabularx}{\textwidth}{L{0.22\textwidth}L{0.29\textwidth}Y}
\toprule
Layer & Primary object & Claim boundary\\
\midrule
Documentary & current public documentation & mechanism description, not historical use\\
Source capability & pinned contract and adapter source & permitted transition, not exercised transition\\
Deployed configuration & address, bytecode, ABI, dependency graph & historically applicable contract surface\\
Observed protocol & canonical chain events and full log envelopes & event occurrence, order, and decoded fields\\
Holder realization & canonical redemption events and balances where available & observed action; not beneficial ownership or complete entitlement\\
\bottomrule
\end{tabularx}
\end{table}

\subsection{Condition preparation}

A CTF condition is associated with an oracle, a question identifier, and an outcome-slot count.  Preparation registers the relation from these inputs to the condition identifier.  It is a protocol registry event, not a statement that the external event has occurred, that the rule is decidable, or that a payout is known.

For the frozen adapter cohort, preparation is useful for three reasons.  It supplies the exact question-to-condition relation; it fixes the outcome-slot structure; and it provides a protocol-side clock that can be compared with initialization and later finality.  Preparation can precede the first oracle request by a long interval and therefore does not enter the redemption risk set.

\subsection{Oracle availability, adapter consumption, and payout recording}

When a valid oracle value is available, a caller can invoke the adapter's downstream resolution path.  In the standard source path, the adapter obtains or settles the oracle value, constructs payout numerators, calls the CTF payout-reporting function, and emits its own terminal event \citep{polymarket_uma_adapter_contract_2026,polymarket_uma_adapter_repo_2026}.  The logical layers remain distinct even when the events share one transaction:

\begin{align*}
  \toracle &: \text{the successful oracle value is irreversible and consumable},\\
  \tconsume &: \text{the adapter consumes the value},\\
  \tresolve &: \text{the protocol records payout numerators},\\
  \tadapter &: \text{the adapter emits its observed terminal event}.
\end{align*}

No universal timestamp ordering is imposed between \(\tresolve\) and \(\tadapter\).  If both are emitted in one transaction, canonical log indices determine the observed order.  If they occur in different transactions or branches, block and transaction identity determine the order.  The relevant empirical object is the exact downstream path, not an assumed sequence copied from a schematic diagram.

\subsection{Core protocol events}

The publication dataset is built from event families derived from the pinned ABI rather than from mutable labels.  The three target event families and their analytical roles are summarized in \Cref{tab:protocol_events}.

\begin{table}[htbp]
\centering
\caption{Core protocol event families and their analytical role.  Exact field names and indexed layout are frozen from the pinned ABI in the release registry.}
\label{tab:protocol_events}
\small
\begin{tabularx}{\textwidth}{L{0.28\textwidth}L{0.30\textwidth}Y}
\toprule
Event & Load-bearing fields & Analytical role\\
\midrule
\makecell[l]{Condition preparation\\{\scriptsize\texttt{ConditionPreparation}}} & oracle, question identifier, outcome-slot count, condition identifier relation & Exact registry and question-to-condition mapping\\
\makecell[l]{Condition resolution\\{\scriptsize\texttt{ConditionResolution}}} & oracle, question identifier, condition identifier, outcome-slot count, ordered payout numerators & Protocol-finality time and payout-vector taxonomy\\
\makecell[l]{Payout redemption\\{\scriptsize\texttt{PayoutRedemption}}} & redeemer, collateral token, parent collection, condition identifier, index sets, payout & Holder-action timing and observed collateral realization\\
\bottomrule
\end{tabularx}
\end{table}

Preparation creates the registry relation.  Condition resolution records the ordered payout numerators.  Payout redemption records one observed holder action.  Source observations from providers remain provenance records; repeated source observations do not create additional chain events.

\subsection{Payout vectors and protocol finality}

Let \(K_c\) be the number of outcome slots for condition \(c\), and let
\[
  \Yvec_c=(Y_{c1},\ldots,Y_{cK_c})
\]
be the ordered payout numerators.  The sum
\[
  D_c=\sum_{k=1}^{K_c}Y_{ck}
\]
forms the payout denominator when \(D_c>0\).  The normalized payout vector is
\[
  \pivec_c=\Yvec_c/D_c.
\]

Protocol finality is defined by the first valid event that makes this ordered rule effective for redemption.  Raw integer numerators remain in the canonical data because scale, ordering, and malformed representations are protocol facts.  Economic taxonomy is applied only after lossless decoding and validity checks.

For a binary condition, canonical classes include YES, NO, and a neutral 50/50 outcome.  Other positive numerator patterns can be valid under a broader slot structure or non-standard route.  A zero denominator, inconsistent slot count, malformed array, or conflicting resolution is not silently normalized into a familiar class.

\subsection{Collateral conservation and the value denominator}

Outcome positions are created against collateral through complete-set splitting, can be merged before finality, and are redeemed against the recorded payout rule afterward \citep{polymarket_positions_tokens_2026,polymarket_ctf_overview_2026}.  This supplies a conservation structure but not an automatic value denominator for the empirical paper.

The amount economically outstanding at protocol finality is not lifetime trading volume, token market capitalization, or the sum of all transfer events.  It depends on net splits, merges, wrapper transformations, collateral routes, fees, burns, and balances.  A complete entitlement reconstruction must therefore follow balance-consistent position and collateral flows.  Until that reconstruction passes its own gate, the paper reports absolute observed payout amounts and event-based timing, not a fraction of all entitlement redeemed.

\subsection{Redeemability and observed cash realization}

After payout reporting, holders can redeem positions for collateral through the applicable CTF and collateral path \citep{polymarket_resolution_docs_2026,polymarket_positions_tokens_2026,polymarket_ctf_overview_2026}.  Protocol redeemability and observed redemption are different outcomes:
\[
  \tredeem(c)\geq\tresolve(c)
\]
for every accepted redemption event under the observed transition rules.

A long observed gap can reflect user inattention, batching, transaction cost, custody operations, account abstraction, gas sponsorship, or a technical constraint.  The paper therefore reports both protocol finality and holder-action timing.  It does not label the entire holder delay as an oracle or protocol delay.

The first observed redemption is a condition-level population statistic, not a universal transition for every holder.  Holder-specific cash realization is an address/position outcome when the necessary balances and redemption logs are available.  Absence of a redemption event does not mean protocol finality failed.

\subsection{Protocol state representation}

\begin{figure}[htbp]
\centering
\resizebox{0.97\textwidth}{!}{%
\begin{tikzpicture}[
  node distance=8mm and 10mm,
  every node/.style={font=\footnotesize},
  st/.style={seriesbox,minimum width=2.45cm,minimum height=10mm},
  terminal/.style={seriesbox,minimum width=2.45cm,minimum height=10mm,fill=seriesblue!10},
  cens/.style={seriesbox,dashed,minimum width=2.45cm,minimum height=10mm,fill=white},
  arr/.style={seriesarrow}
]
\node[st] (prepared) {Prepared\\condition};
\node[st,right=of prepared] (oracle) {Oracle-final\\request};
\node[terminal,above right=6mm and 14mm of oracle] (protocol) {Payout vector\\recorded};
\node[st,below right=6mm and 14mm of oracle] (adapter) {Adapter-terminal\\log};
\node[st,right=16mm of protocol] (redeemable) {Redeemable\\positions};
\node[terminal,above right=7mm and 12mm of redeemable] (positive) {First positive\\redemption};
\node[terminal,below right=7mm and 12mm of redeemable] (zero) {First zero-payout\\redemption};
\node[cens,below=18mm of redeemable] (censored) {No observed redemption\\by fixed snapshot};
\draw[arr] (prepared) -- (oracle);
\draw[arr] (oracle) -- (protocol);
\draw[arr] (oracle) -- (adapter);
\draw[<->,thick,dashed,seriesblue] (protocol) -- node[right,font=\scriptsize,align=left] {relative emission order\\measured, not assumed} (adapter);
\draw[arr] (protocol) -- (redeemable);
\draw[arr] (redeemable) -- (positive);
\draw[arr] (redeemable) -- (zero);
\draw[arr,dashed] (redeemable) -- (censored);
\node[draw=black!50,dashed,rounded corners=2pt,fit=(protocol)(adapter),inner sep=4mm,label={[font=\scriptsize]below:downstream mechanism transaction or sequence}] {};
\end{tikzpicture}%
}
\caption{Protocol and redemption state representation.  Oracle finality is followed by downstream mechanism actions that can emit both the protocol-resolution and adapter-terminal logs; no universal order between those emitted logs is imposed.  The payout vector creates redeemability.  Positive and zero-payout redemption are distinct observed outcomes, while absence of an event at the snapshot is right censoring rather than a terminal failure state.}
\label{fig:protocol_state_machine}
\end{figure}

The state representation distinguishes mechanism-controlled and holder-controlled transitions.  Oracle finality and payout recording are mechanism states.  Redemption is an action under an already recorded rule.  A condition can therefore be protocol-final and right-censored for observed redemption without contradiction.

\subsection{Architecture summary}

The downstream stack contains at least five distinct economic layers:

\begin{enumerate}
  \item a prepared condition and outcome-slot structure;
  \item an irreversible oracle result;
  \item adapter consumption and adapter-terminal emission;
  \item a recorded protocol payout rule and technical redeemability;
  \item observed holder conversion of positions into collateral.
\end{enumerate}

The rest of the paper formalizes these layers, reconstructs them from exact events, and defines which claims each layer can support.

\section{Formal Framework: Conditions, Payouts, and Holder Action}
\label{sec:formal_framework}

Let \(q\in\Cohort\) index a frozen adapter-question, \(r\) an exact oracle request generation, \(c\in\Conditions\) a CTF condition, and \(j\in\Redemptions_c\) a redemption event linked to condition \(c\).  Let \(\phi(q)=c\) denote the exact question-to-condition mapping when identified.  The mapping remains partial until the linkage gate passes.

For condition \(c\), let \(K_c\geq2\) be the outcome-slot count and let
\[
  \Yvec_c=(Y_{c1},\ldots,Y_{cK_c}),
  \qquad
  D_c=\sum_{k=1}^{K_c}Y_{ck}.
\]
When \(D_c>0\), the normalized payout vector is
\[
  \pivec_c=\frac{\Yvec_c}{D_c}.
\]
All numerators remain exact integers in the canonical data.  Floating-point conversion is not used for identity, classification, or accounting.

\begin{assumption}[Pinned transition semantics]
\label{ass:pinned-transition-semantics}
The accepted contract and adapter artifacts correctly represent the transition preconditions used in the empirical interval.  In particular, an accepted redemption requires a previously recorded payout rule for the same condition.
\end{assumption}

\begin{assumption}[Exact load-bearing event identity]
\label{ass:exact-load-bearing-event-identity}
Analytical events have a valid chain identifier, contract address, transaction hash, exact log index, complete ordered topics, and complete data.  A nullable transaction index is auxiliary and does not replace or weaken the exact log index.
\end{assumption}

\begin{assumption}[Fixed administrative censoring]
\label{ass:fixed-administrative-censoring}
The follow-up boundary \(\tstar\) is fixed before holder-event analysis.  Events after the boundary are unobserved for this release and are not interpreted as absent forever.
\end{assumption}

\subsection{Protocol and holder clocks}

\begin{definition}[Preparation time]
\label{def:preparation-time}
\(\tprepare(c)\) is the timestamp of the canonical condition-preparation event linked to \(c\).  It is a registry time and need not be close to event occurrence, oracle request, or finality.
\end{definition}

\begin{definition}[Protocol-finality time]
\label{def:protocol-finality-time}
\(\tresolve(c)\) is the timestamp of the first valid condition-resolution event whose ordered payout numerators become the protocol's recorded redemption rule for \(c\).
\end{definition}

\begin{definition}[Adapter-terminal emission time]
\label{def:adapter-terminal-emission-time}
\(\tadapter(c)\) is the timestamp of the canonical terminal event emitted by the requesting adapter and linked to \(c\) through the exact Part~I relation.  It is an observed log time, not a proxy for the internal beginning of the call.
\end{definition}

\begin{definition}[First observed redemption]
\label{def:first-observed-redemption}
\(\tredeem(c)\) is the earliest timestamp among canonical redemption events linked to \(c\).  If no qualifying event is observed by the fixed follow-up boundary \(\tstar\), the endpoint is right-censored.
\end{definition}

\begin{definition}[First observed positive-payout redemption]
\label{def:first-observed-positive-payout-redemption}
\(\tredeempos(c)\) is the earliest canonical redemption event for \(c\) whose paid collateral is strictly positive.  A zero-payout redemption can therefore precede \(\tredeempos(c)\).
\end{definition}

The ordinary load-bearing partial order is
\begin{equation}
  \toracle(c)\leq\tresolve(c)\leq t_{cj}
  \label{eq:ordinary_partial_order}
\end{equation}
for each accepted redemption event \(j\), and
\[
  \toracle(c)\leq\tadapter(c)
\]
when the exact upstream endpoint is observed.  No universal ordering is imposed between \(\tresolve(c)\) and \(\tadapter(c)\).

\subsection{Exact ordering beyond wall-clock time}

Each canonical event has an order key
\[
  \kappa(e)=
  \bigl(
  b_e,\, i_e^{\mathrm{tx}},\, i_e^{\mathrm{log}}
  \bigr),
\]
where \(b_e\) is block number, \(i_e^{\mathrm{tx}}\) is nullable transaction index, and \(i_e^{\mathrm{log}}\) is exact log index.  The canonical event order uses
\[
  (b_e,i_e^{\mathrm{log}},h_e,a_e),
\]
with transaction hash \(h_e\) and contract address \(a_e\) as deterministic tie-break fields.  A missing transaction index is retained as null and does not block the event.  A missing log index is load-bearing and must be recovered exactly or excluded from canonical ordering.

\begin{proposition}[Same-transaction zero duration does not erase state order]
\label{prop:same_tx_order}
Suppose adapter terminal transition and condition resolution occur in one transaction.  Their wall-clock duration is zero at block-timestamp precision, but their state order remains identified when their canonical log indices differ.
\end{proposition}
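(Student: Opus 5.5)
The argument splits into two parts: a zero wall-clock duration, and an identified order. We work only with the canonical order key defined just before the statement, together with Assumption~\ref{ass:exact-load-bearing-event-identity}.

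\emph{Zero duration.} Let \(e_A\) be the adapter-terminal log and \(e_R\) the \code{ConditionResolution} log, with both emitted in one transaction \(h\). A transaction is included in exactly one block, so \(b_{e_A}=b_{e_R}\). All logs of a block carry that block's timestamp, so \(\tadapter(c)=\tresolve(c)\). The wall-clock difference \(\tadapter(c)-\tresolve(c)\) is therefore \(0\) at block-timestamp precision. This is the first clause of the statement.

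\emph{Order is identified.} Under Assumption~\ref{ass:exact-load-bearing-event-identity}, each load-bearing event has an exact log index. Log indices are positions within the block's ordered receipt logs, and execution within a block is sequential. Hence \(i^{\mathrm{log}}\) is a strict total order on the logs of a single block that coincides with emission order. Within one transaction it therefore records which emission happened first.

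Since \(b_{e_A}=b_{e_R}\), the canonical lexicographic order \((b_e,i_e^{\mathrm{log}},h_e,a_e)\) is decided at the second coordinate whenever \(i^{\mathrm{log}}_{e_A}\neq i^{\mathrm{log}}_{e_R}\). The tie-break fields \(h_e\) and \(a_e\) are never consulted, and neither is the nullable \(i^{\mathrm{tx}}\). Set the signed emission order to \(\operatorname{sgn}(i^{\mathrm{log}}_{e_R}-i^{\mathrm{log}}_{e_A})\). This is a deterministic function of the canonical data, so the state order is identified even though the timestamps coincide. No ordering between \(\tresolve\) and \(\tadapter\) is assumed in advance; the argument only reads off whichever order the logs record, which is consistent with the paper's refusal to impose a universal ordering.

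\emph{Main obstacle.} The difficult step is justifying that log-index order equals emission order, and hence state order. This is a fact about EVM receipt semantics rather than anything internal to the paper. I would state it as a consequence of Assumption~\ref{ass:pinned-transition-semantics} together with Assumption~\ref{ass:exact-load-bearing-event-identity}: the pinned artifacts emit events sequentially, and indices are recovered exactly rather than imputed. I would also note the degenerate case of equal indices within the same block. That case cannot arise for two distinct logs of one block, which is why the hypothesis ``indices differ'' is automatically satisfied for distinct canonical events. If an index was not recovered exactly, the event is excluded from canonical ordering, so the proposition's hypothesis already rules it out.
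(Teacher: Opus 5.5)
Your proof is correct and follows the same route as the paper's: because both logs are in one transaction they share a block timestamp, which gives zero duration, and the exact log index induces a strict order within the transaction, so the order is read off the second coordinate of the canonical key. Your added detail is a faithful elaboration, but the fact that log-index order equals emission order is an EVM receipt convention that the paper simply takes as given. It does not follow from Assumption~\ref{ass:pinned-transition-semantics}, which only concerns transition preconditions.
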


\begin{proof}
Both events inherit the same block timestamp and transaction hash.  Canonical log identity includes the exact log index, which induces a strict order within the transaction.  Thus the duration measured in seconds is zero while the transition order is non-degenerate.
\end{proof}

\subsection{Condition-level state}

Define the condition-level state vector
\[
  X_c(t)=\bigl(P_c(t),R_c(t),H_c(t)\bigr),
\]
where \(P_c(t)\) records whether the condition is prepared, \(R_c(t)\) records whether a valid payout vector has been recorded, and \(H_c(t)\) is the history of observed redemption events through \(t\).  The state is event-sourced: it is reconstructed from the ordered canonical history rather than read from a mutable terminal snapshot.

A minimal state path is
\[
  \text{Unprepared}
  \rightarrow
  \text{Prepared}
  \rightarrow
  \text{ProtocolFinal}
  \rightarrow
  \text{Redeemable},
\]
with holder-action branches
\[
  \text{NoRedemption}
  \rightarrow
  \text{ZeroOnlyObserved}
  \rightarrow
  \text{PositiveObserved},
\]
and a possible direct transition from \(\text{NoRedemption}\) to \(\text{PositiveObserved}\).  Administrative censoring at \(\tstar\) is an observation boundary, not an absorbing protocol state.

\subsection{Separation of finality layers}

\begin{proposition}[Oracle finality does not identify protocol finality]
\label{prop:oracle_protocol_nonequiv}
There exist two histories with the same oracle-final timestamp and terminal oracle value but different protocol-finality times.
\end{proposition}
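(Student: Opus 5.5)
The proof is an existence statement, so I will construct two explicit histories \(H^{(1)}\) and \(H^{(2)}\) for one condition \(c\). They share everything upstream of the adapter and differ only in when the downstream resolution call is made. Both histories use the same prepared condition \(c\) (same oracle, question identifier, and \(K_c=2\)), the same oracle request generation \(r\), and the same settlement. Hence \(\toracle(c)\) and the terminal oracle value (say YES, mapped to \(\Yvec_c=(1,0)\)) coincide.

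In \(H^{(1)}\), a caller invokes the adapter's downstream resolution path in a block with timestamp \(\tau_1\geq\toracle(c)\). That call emits \code{ConditionResolution} with payout numerators \((1,0)\). In \(H^{(2)}\), the same call with the same payout vector is made in a later block with timestamp \(\tau_2>\tau_1\). By \Cref{def:protocol-finality-time}, \(\tresolve(c)\) is the timestamp of the first valid condition-resolution event. This gives \(\tresolve(c)=\tau_1\) in \(H^{(1)}\) and \(\tresolve(c)=\tau_2\) in \(H^{(2)}\), which are different values.

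The key steps, in order, are as follows.

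\begin{enumerate}
\item Confirm that both histories are admissible under \Cref{ass:pinned-transition-semantics}. The pinned source lets any caller trigger downstream consumption at any time after the oracle value becomes available. It imposes no deadline tying the call to \(\toracle(c)\); the oracle-final state only enables the call and does not emit the payout. This is why \eqref{eq:ordinary_partial_order} is only an inequality.
\item Confirm that both histories satisfy \Cref{ass:exact-load-bearing-event-identity}. Each history has one canonical resolution event with a full log envelope.
\item Check that no oracle-layer quantity changes between \(H^{(1)}\) and \(H^{(2)}\). The request, proposal, dispute, and settlement events are literally the same log entries with the same block keys \(\kappa(e)\). Only the downstream transaction moves.
\end{enumerate}

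The conclusion is that \(\tresolve\) is not a function of \((\toracle,\text{oracle value})\), which is the claimed non-identification.

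The main obstacle is step 1: showing that the delayed call in \(H^{(2)}\) is genuinely permitted by the pinned transition semantics, and not ruled out by some automatic settlement-triggers-payout coupling. I would argue from the source-capability layer. In the standard path described in the architecture section, the adapter settles or reads the oracle value and reports payouts in one caller-initiated transaction, so nothing forces that transaction to occur in the same block as oracle finality.

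As a fallback that needs no timing argument, take \(H^{(2)}\) to be a history in which the downstream call has not occurred by \(\tstar\). Then \(\tresolve(c)\) is undefined (censored) in \(H^{(2)}\) while it is finite in \(H^{(1)}\). The protocol-finality time still differs even though the oracle-final data coincide.
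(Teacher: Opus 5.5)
Your proposal is correct and follows the paper's own argument: hold the oracle layer fixed (same request, same oracle-final time, same value) and move only the downstream adapter-to-CTF resolution transaction later, so \(\tresolve\) changes while \((\toracle,\text{value})\) does not. Your version is slightly tighter than the paper's, which says only ``a later transaction'': requiring a later block with \(\tau_2>\tau_1\) guarantees that the timestamps actually differ rather than just the log order, and your admissibility check makes explicit what the paper assumes; the censoring fallback is unnecessary.
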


\begin{proof}
In both histories the same oracle request settles at \(\toracle\) with the same value.  In the first history the adapter consumes the result and records the payout vector in the same transaction.  In the second, protocol consumption occurs in a later transaction.  The upstream oracle observation is identical, while \(\tresolve\) differs.  Hence oracle finality does not identify protocol finality.
\end{proof}

\begin{proposition}[Protocol finality does not identify holder realization]
\label{prop:protocol_holder_nonequiv}
There exist two histories with the same condition, payout vector, and protocol-finality time but different redemption histories.
\end{proposition}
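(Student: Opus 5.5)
The plan is to prove \Cref{prop:protocol_holder_nonequiv} by an explicit pair of admissible histories, mirroring the construction used for \Cref{prop:oracle_protocol_nonequiv}. Fix a prepared binary condition \(c\) with \(K_c=2\) and one recorded payout vector, say \(\Yvec_c=(0,1)\), so \(D_c=1\) and \(\pivec_c=(0,1)\). Fix a single canonical \code{ConditionResolution} event at \(\tresolve(c)\) with a fixed order key \(\kappa\). Both histories will share the entire event prefix through that resolution log: preparation, oracle settlement, adapter consumption, and payout recording. Hence the condition identity, the ordered numerators, and \(\tresolve(c)\) coincide by construction.

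The two histories will diverge only in holder action after \(\tresolve(c)\) and before the fixed boundary \(\tstar\).
\begin{enumerate}
  \item In the first history, a holder of the winning position submits a \code{PayoutRedemption} for \(c\) at some \(t_1\) with \(\tresolve(c)<t_1<\tstar\) and strictly positive paid collateral. Then \(\tredeem(c)=\tredeempos(c)=t_1\).
  \item In the second history, no redemption linked to \(c\) is emitted by \(\tstar\). Both endpoints of \Cref{def:first-observed-redemption,def:first-observed-positive-payout-redemption} are then right-censored under \Cref{ass:fixed-administrative-censoring}.
\end{enumerate}
Alternatively, the second history can be strengthened to contain one zero-payout redemption followed by a later positive one. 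This also separates \(\tredeem\) from \(\tredeempos\) while keeping the protocol layer fixed.

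The step that needs care is admissibility: both histories must be consistent with the transition semantics of \Cref{ass:pinned-transition-semantics}. I will argue as follows.
\begin{enumerate}
  \item Redemption is a holder-initiated call whose only protocol precondition is a previously recorded payout rule for \(c\). That rule holds identically in both histories.
  \item The contract imposes no obligation on any holder to redeem, so the empty redemption history is feasible.
  \item A positive redemption is feasible provided some address holds a nonzero balance of the winning position. I will secure this by including in the shared prefix a complete-set split of positive collateral before resolution, which creates both positions.
  \item The redemption log in the first history gets an order key strictly after the resolution log, so the ordering \eqref{eq:ordinary_partial_order} is respected and \Cref{ass:exact-load-bearing-event-identity} is met.
\end{enumerate}

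Since the two histories agree on \((c,\Yvec_c,\tresolve(c))\) but differ in \(H_c(t)\) for \(t\in[t_1,\tstar]\), and in the observed first-redemption endpoint, no function of the protocol-finality data can determine the redemption history. I expect only the admissibility step to require real justification; the rest is bookkeeping.
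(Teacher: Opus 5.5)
Your proposal is correct and follows the paper's own proof. You fix one valid condition-resolution event, let one history contain a post-finality redemption and the other none by $\tstar$, and observe that the condition, payout vector and $\tresolve(c)$ agree while $H_c(t)$ and $\tredeem(c)$ differ. Your admissibility checks (a shared complete-set split so that a winning balance exists, and a redemption ordered strictly after the resolution log) make explicit what the paper's shorter argument leaves implicit.
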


\begin{proof}
Fix one valid condition-resolution event.  In the first history a holder redeems immediately after protocol finality.  In the second, no holder redeems by \(\tstar\).  The protocol state and payout vector are identical, while \(H_c(t)\) and \(\tredeem(c)\) differ.  Therefore protocol finality does not identify holder realization.
\end{proof}

\begin{corollary}[A scalar resolution timestamp is insufficient]
\label{cor:scalar_timestamp_insufficient}
Any scalar timestamp that identifies only oracle finality, adapter terminality, or protocol finality fails to identify the full holder-action path whenever redemption timing enters the economic outcome.
\end{corollary}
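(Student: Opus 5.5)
The plan is to argue by non-identification: exhibit two histories that agree on the scalar timestamp but differ on the holder-action path, and observe that no function of the scalar can separate them. I will first make "identifies the full holder-action path" precise. It means there is a map from the scalar timestamp, possibly together with the payout vector, to the path \(H_c(\cdot)\), or at least to the endpoints \(\tredeem(c)\) and \(\tredeempos(c)\). I will also state the hypothesis "redemption timing enters the economic outcome" concretely: there is an outcome functional \(V\) that takes different values on at least two histories differing only in \(H_c\). A standard example is a discounted realized collateral amount.

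I then treat the three cases for the scalar separately, but reduce all of them to \Cref{prop:protocol_holder_nonequiv}.

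For protocol finality, \Cref{prop:protocol_holder_nonequiv} already supplies two histories with the same \(\tresolve(c)\) and payout vector. In the first a holder redeems immediately; in the second nothing is redeemed by \(\tstar\). These histories have different \(H_c\), so any map from \(\tresolve(c)\) to the path must give the same output on both and is therefore wrong on one of them.

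For oracle finality and adapter terminality, I reuse the same pair of histories. I additionally fix the upstream oracle settlement and adapter-terminal emission to be identical in both, for instance both occurring in the same downstream transaction as the resolution event, as in the first history of \Cref{prop:oracle_protocol_nonequiv}. This keeps \(\toracle(c)\) and \(\tadapter(c)\) equal across the pair while the redemption histories still differ. It is consistent with the partial order \eqref{eq:ordinary_partial_order} and with \Cref{ass:pinned-transition-semantics}, because the only redemption in either history occurs after the recorded payout rule.

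Finally, the hypothesis that \(V\) depends on redemption timing gives \(V\) different values on the two histories. Since the scalar agrees on both, the economic outcome is not a function of the scalar, which is the claimed insufficiency.

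The main obstacle is the precision of the statement rather than any calculation. I have to ensure the constructed pair stays feasible under the pinned semantics and that "whenever redemption timing enters the economic outcome" is read as genuine dependence on \(H_c\). If \(V\) were constant in \(H_c\), the corollary would be vacuous. I will therefore state that dependence as the hypothesis and keep the argument a direct appeal to \Cref{prop:protocol_holder_nonequiv}, extended so that it also holds the oracle and adapter clocks fixed.
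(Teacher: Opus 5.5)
Your proposal is correct. It is the same two-histories non-identification argument as the paper's, but routed slightly differently. The paper's proof cites both Proposition~\ref{prop:oracle_protocol_nonequiv} and Proposition~\ref{prop:protocol_holder_nonequiv}: a scalar fixes one coordinate, and histories that agree on it can still differ in ``downstream protocol or holder-action state.''

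You instead reduce all three cases to Proposition~\ref{prop:protocol_holder_nonequiv} alone. You strengthen it so that $\toracle(c)$, $\tadapter(c)$ and $\tresolve(c)$ are all held fixed while only $H_c$ varies. This gains two things:
\begin{itemize}
\item The adapter-terminal case is handled explicitly, which neither cited proposition does.
\item The difference you exhibit is in $H_c$ itself. Proposition~\ref{prop:oracle_protocol_nonequiv} alone yields only a difference in $\tresolve(c)$, and that changes $H_c$ only if some redemption actually occurs. The paper covers this by counting a protocol-state difference as a path difference.
\end{itemize}
The paper's route has its own gain: an oracle-final scalar does not even pin down the redemption risk-set origin $\tresolve(c)$, so it misses two layers rather than one.

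One small repair to your last paragraph. Knowing that $V$ differs on \emph{some} pair of histories differing only in $H_c$ does not mean it differs on your particular pair (immediate redemption versus none by $\tstar$). Use the witnessing pair itself: since it differs only in $H_c$, it already agrees on all three scalars. That step goes beyond the literal claim in any case; your construction already establishes non-identification of the path.
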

\begin{proof}
A scalar endpoint fixes at most one finality coordinate. By Propositions~\ref{prop:oracle_protocol_nonequiv} and~\ref{prop:protocol_holder_nonequiv}, histories can agree on that scalar coordinate while differing in downstream protocol or holder-action state. Hence the scalar timestamp cannot identify the full path whenever redemption timing matters.
\end{proof}

\subsection{Payout representation}

\begin{proposition}[Payout-vector scale invariance]
\label{prop:payout_scale}
Assume \(D_c>0\). For any scalar \(a>0\), payout numerators \(\Yvec_c\) and \(a\Yvec_c\) define the same normalized payout vector whenever both are valid protocol representations.
\end{proposition}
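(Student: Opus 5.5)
The plan is a direct computation from the definitions of \(D_c\) and \(\pivec_c\) in \Cref{sec:formal_framework}. Write \(\Yvec'_c=a\Yvec_c\), with components \(Y'_{ck}=aY_{ck}\), and let \(D'_c\) be its denominator. First I would compute this denominator. By linearity of the finite sum,
\[
  D'_c=\sum_{k=1}^{K_c}aY_{ck}=a\,D_c .
\]
Since \(a>0\) and \(D_c>0\), it follows that \(D'_c>0\). So the normalized vector for \(a\Yvec_c\) is well defined exactly when the one for \(\Yvec_c\) is. This is the only place the positivity hypotheses enter, and I would state it explicitly. A nonpositive \(a\) would either produce a zero denominator or flip signs, and the paper excludes both from normalization.

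Second, I would compare the two normalized vectors componentwise:
\[
  \pi'_{ck}=\frac{aY_{ck}}{aD_c}=\frac{Y_{ck}}{D_c}=\pi_{ck},
  \qquad k=1,\ldots,K_c .
\]
Cancelling the nonzero scalar \(a\) gives \(\pivec'_c=\pivec_c\). This also shows the normalized vector is order-preserving: scaling leaves the slot index \(k\) untouched, so the same ordered rule is recovered.

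The only real subtlety is the qualifier ``whenever both are valid protocol representations.'' The numerators are exact integers, so \(a\Yvec_c\) is a legitimate representation only for those \(a\) that keep every component integral and within the contract's admissible range. The statement assumes this validity rather than asking us to prove it. I would therefore add a remark that the proposition makes no claim about which multiples are admissible. I would also note that the argument can be carried out in exact rational arithmetic, consistent with the paper's rule against floating-point classification. This is what justifies placing raw numerators in the canonical data while assigning economic classes only through \(\pivec_c\).
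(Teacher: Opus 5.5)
Your proposal is correct and matches the paper's proof, which is the same cancellation $a\Yvec_c/(aD_c)=\Yvec_c/D_c=\pivec_c$. Your explicit check that $aD_c>0$, so the scaled normalization is well defined, is a small addition that the paper leaves implicit.
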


\begin{proof}
The normalized vector under the scaled representation is
\[
\frac{a\Yvec_c}{\sum_k aY_{ck}}
=
\frac{a\Yvec_c}{aD_c}
=
\frac{\Yvec_c}{D_c}
=
\pivec_c.
\]
The raw numerator representation remains relevant for protocol validation, but the economic per-unit payout class is unchanged.
\end{proof}

For binary conditions the normalized classes include
\[
  (1,0),\qquad (0,1),\qquad \left(\tfrac12,\tfrac12\right),
\]
plus any other valid normalized pattern admitted by the frozen slot and route structure.  Classification occurs only after verifying slot count, positive denominator, ordered numerators, and exact event identity.

\subsection{Redemption and entitlement}

Let \(a_{cj}\geq0\) be the collateral paid in redemption event \(j\), and define observed cumulative payout after protocol finality by
\begin{equation}
  A_c(u)=
  \sum_{j\in\Redemptions_c}
  a_{cj}\,
  \ind{0\leq t_{cj}-\tresolve(c)\leq u}.
  \label{eq:observed_payout_curve}
\end{equation}
This is an identified event-ledger quantity when event coverage and decoding pass.

Let \(W_c\) denote total winning-position entitlement at protocol finality, measured in the same collateral unit.  The entitlement-completion ratio would be
\begin{equation}
  \Gamma_c(u)=\frac{A_c(u)}{W_c}.
  \label{eq:entitlement_completion}
\end{equation}

\begin{theorem}[Redemption-event insufficiency for entitlement completion]
\label{thm:entitlement_nonidentification}
A complete canonical ledger of \emph{redemption events alone} does not identify \(\Gamma_c(u)\) unless \(W_c\) is independently identified. This does not rule out identification from a broader balance-complete reconstruction of splits, merges, transfers, wrappers, and outstanding ERC-1155 positions.
\end{theorem}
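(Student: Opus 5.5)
The plan is to argue by a two-history construction, in the same style as the proofs of \Cref{prop:oracle_protocol_nonequiv} and \Cref{prop:protocol_holder_nonequiv}. Non-identification means the map from the observable (the complete canonical redemption ledger $\{(t_{cj},a_{cj},\text{redeemer},\text{index sets})\}_{j\in\Redemptions_c}$ together with the resolution event fixing $\tresolve(c)$ and $\Yvec_c$) to $\Gamma_c(u)$ is not a function. So I will exhibit two admissible protocol histories that agree exactly on this observable but have different $W_c$, and hence different $\Gamma_c(u)$ for some $u$.

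\textbf{Construction.} Fix a binary condition with a valid canonical payout, say $\pivec_c=(1,0)$, and a single redemption event at $\tresolve(c)+\delta$ paying $a>0$ to holder $h_1$, who held $a$ units of the winning position. In history one, complete-set splitting before finality creates exactly $a$ complete sets: $h_1$ keeps the winning leg, and the losing leg is held elsewhere. Then $W_c=a$. History two is identical except for one additional split of $b>0$ collateral by a second address $h_2$, made before finality in a separate transaction. The $b$ winning units are held by $h_2$, and $h_2$ never redeems by $\tstar$. Then $W_c=a+b$.

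\textbf{Checking the two histories.} First, both histories are admissible under \Cref{ass:pinned-transition-semantics}. Splits and transfers require no payout rule, and the single redemption follows the recorded payout, so the ordering \eqref{eq:ordinary_partial_order} holds. Second, they produce the identical set of \code{ConditionResolution} and \code{PayoutRedemption} events with identical canonical identities under \Cref{ass:exact-load-bearing-event-identity}. The extra split emits only \code{PositionSplit}/ERC-1155 \code{TransferSingle} logs, which lie outside the redemption-event family. Third, \eqref{eq:observed_payout_curve} gives $A_c(u)=a$ for all $u\geq\delta$ in both histories, while \eqref{eq:entitlement_completion} gives $\Gamma_c(u)=1$ in the first and $a/(a+b)<1$ in the second. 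Because the observable is identical and the estimand differs, no functional of the redemption ledger identifies $\Gamma_c(u)$. If instead $W_c$ is supplied independently, $\Gamma_c(u)=A_c(u)/W_c$ is a known function of identified quantities. This gives the ``unless'' clause.

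\textbf{Main obstacle.} The delicate step is arguing that the two histories really are observationally equivalent on the \emph{redemption-event} data. The ledger must not implicitly encode supply: for example, through collateral balances embedded in redemption logs, or through the fact that a complete ledger might be read as ``everyone eventually redeems.'' The fixed censoring boundary of \Cref{ass:fixed-administrative-censoring} handles the latter, since $h_2$'s non-redemption by $\tstar$ is consistent with later redemption. I will check the decoded fields of \code{PayoutRedemption} in \Cref{tab:protocol_events} to confirm that none reports outstanding balances.

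\textbf{Final clause.} For the last sentence of the theorem, I will note that the distinguishing split is observable in the split and transfer logs. A balance-complete reconstruction therefore separates the two histories, so the construction does not preclude identification from that broader information set.
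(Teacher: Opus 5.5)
Your proposal is correct and takes the same route as the paper. The paper's proof fixes the redemption history (and hence $A_c(u)$) and appeals to two admissible balance histories that differ only in unredeemed winning positions, so that $W_c$ and therefore $\Gamma_c(u)$ differ. Your extra split of $b$ units by a non-redeeming holder, giving $\Gamma_c(u)=1$ versus $a/(a+b)$ for $u\geq\delta$, makes that one-line example concrete.
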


\begin{proof}
Fix an observed redemption history and hence \(A_c(u)\).  Two admissible balance histories can generate the same redemption events but different outstanding winning-position entitlement at protocol finality, for example because unredeemed winning positions differ.  Then the numerator is identical while \(W_c\) and therefore \(\Gamma_c(u)\) differ.  The event ledger alone cannot select between the histories.
\end{proof}

\begin{figure}[htbp]
\centering
\resizebox{0.96\textwidth}{!}{%
\begin{tikzpicture}[
  node distance=8mm and 12mm,
  every node/.style={font=\footnotesize},
  obs/.style={seriesbox,minimum width=3.0cm,minimum height=11mm},
  missing/.style={seriesbox,dashed,fill=white,minimum width=3.2cm,minimum height=11mm},
  arr/.style={seriesarrow}
]
\node[obs] (events) {Canonical redemption\\events};
\node[obs,right=of events] (paid) {Observed collateral\\paid};
\node[obs,right=of paid] (curve) {Observed cumulative\\payout curve $A_c(u)$};
\node[missing,below=14mm of paid] (balances) {Complete winning-position\\balances / entitlement $W_c$};
\node[missing,right=of balances] (ratio) {Entitlement completion\\$A_c(u)/W_c$};
\draw[arr] (events)--(paid);
\draw[arr] (paid)--(curve);
\draw[arr,dashed] (balances)--(ratio);
\draw[arr,dashed] (curve)--(ratio);
\node[font=\scriptsize,seriesblue,align=center] at ($(paid.south)+(0,-8mm)$) {Observed and identified};
\node[font=\scriptsize,align=center] at ($(ratio.south)+(0,-8mm)$) {Not identified without a separate value gate};
\end{tikzpicture}%
}
\caption{Observation-versus-entitlement boundary.  Redemption events identify realized collateral and its timing.  The fraction of all entitlement redeemed additionally requires complete winning-position balances, which are not implied by the event ledger.}
\label{fig:observation_entitlement_boundary}
\end{figure}

\subsection{Source observations and canonical events}

Let \(o\in\sourceobs\) denote a provider source observation and \(e\in\canon\) a canonical on-chain event.  Several source observations can refer to the same event because of overlapping ranges, repeated provider delivery, recovery, or transport duplication.

\begin{proposition}[Source duplication does not increase event multiplicity]
\label{prop:source_duplication}
Suppose several source observations share one exact canonical identity and one semantic payload, and receipt or chain evidence contains one corresponding log.  Then the correct representation is one canonical event with several provenance relations.
\end{proposition}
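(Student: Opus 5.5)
The plan is to make precise what ``the correct representation'' means and then show that any other representation contradicts the identity conventions already fixed. I will model a canonical event as an equivalence class of source observations under the exact identity key. By \Cref{ass:exact-load-bearing-event-identity}, this key is the tuple of chain identifier, contract address \(a_e\), transaction hash \(h_e\), and exact log index \(i_e^{\mathrm{log}}\), with the full ordered topics and data as payload. I will define a map \(\iota:\sourceobs\to\canon\) sending each observation to the event carrying its identity key. Provenance is then the relation \(\{(o,\iota(o))\}\). The claim becomes: under the hypotheses, \(\iota\) sends all the given observations to a single element of \(\canon\), and the canonical multiplicity of that element is one.

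The steps, in order, are as follows. First, note that by hypothesis the observations \(o_1,\dots,o_m\) share one exact identity, so \(\iota(o_1)=\dots=\iota(o_m)\) by construction; no further quotienting is needed. Second, I will rule out the alternative of representing them as \(m\) distinct canonical events. Distinct events would need distinct identity keys, because the canonical order \((b_e,i_e^{\mathrm{log}},h_e,a_e)\) is a total order on keys and would otherwise have to place two events at the same position. That is impossible, since within one transaction the log index is unique and so induces a strict order, as in the argument of \Cref{prop:same_tx_order}. Third, I will invoke the receipt or chain evidence. The transaction receipt contains exactly one log at that index, so the chain supports exactly one event with that key. Any count above one would assert logs absent from the receipt. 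Fourth, I will use the shared semantic payload to exclude a conflict state of the \(\ObsConflict\) kind: all observations carry identical topics and data, so merging them loses no information, and every observation survives as a provenance edge.

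The main obstacle is the second and third steps together, because they require a justification that ``one identity key'' implies ``one on-chain log'' rather than simply assuming it. The plan is to lean on the receipt hypothesis: a log is uniquely determined by transaction hash and log index. Chain identifier and contract address then only serve as consistency checks against the receipt. I will also briefly note why the hypotheses are needed. Without the shared-payload condition, two observations with one key but different data would signal a decoding or transport fault, not duplication. Finally, I will conclude that event counts, hazards, and \(A_c(u)\) in \eqref{eq:observed_payout_curve} are computed over \(\canon\), not \(\sourceobs\), so provider duplication cannot inflate them.
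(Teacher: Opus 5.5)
Your proposal is correct and follows essentially the same route as the paper: the shared exact identity, together with the single log in the receipt, fixes one on-chain event; the shared payload rules out a conflict; and repeated delivery is retained only as provenance, so event counts are not inflated. Your scaffolding is more elaborate than the paper's three-line argument, and your step-two appeal to the total order is redundant, since the receipt hypothesis in step three already does that work.
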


\begin{proof}
The chain identity and semantic payload establish one on-chain event.  Repeated source delivery changes transport multiplicity but not the chain history.  One event plus all provenance relations therefore preserves both facts without inflating event counts.
\end{proof}

\subsection{Clock decomposition}

For exactly linked conditions define
\begin{align}
  L_c^{OP} &= \tresolve(c)-\toracle(c), \\
  L_c^{OA,C} &= \tadapter(c)-\toracle(c), \\
  D_c^{PA} &= \tadapter(c)-\tresolve(c), \\
  L_c^{PR} &= \tredeem(c)-\tresolve(c), \\
  L_c^{PR+} &= \tredeempos(c)-\tresolve(c).
\end{align}
The signed adapter--protocol gap \(D_c^{PA}\) is accompanied by a categorical order class when endpoints share a timestamp.  The redemption intervals are censored when the endpoint is not observed by \(\tstar\).

\begin{figure}[htbp]
\centering
\resizebox{0.92\textwidth}{!}{%
\begin{tikzpicture}[
  node distance=14mm and 22mm,
  every node/.style={font=\footnotesize},
  st/.style={seriesbox,minimum width=2.6cm,minimum height=10mm},
  arr/.style={-{Latex[length=2mm]},thick,draw=black!75}
]
\node[st] (oracle) {Oracle final};
\node[st,above right=8mm and 25mm of oracle] (protocol) {Protocol final};
\node[st,below right=8mm and 25mm of oracle] (adapter) {Adapter terminal};
\node[st,right=28mm of protocol] (redeem) {First observed\\redemption};
\draw[arr] (oracle) -- node[above left,font=\scriptsize] {$L^{OP}$} (protocol);
\draw[arr] (oracle) -- node[below left,font=\scriptsize] {$L^{OA}$} (adapter);
\draw[<->,thick,dashed,seriesblue] (protocol) -- node[right,font=\scriptsize,align=left] {$D^{PA}$ signed gap;\\log-order class if tied} (adapter);
\draw[arr] (protocol) -- node[above,font=\scriptsize] {$L^{PR}$} (redeem);
\end{tikzpicture}%
}
\caption{Measurement graph for downstream finality clocks.  Oracle-to-protocol and oracle-to-adapter intervals are estimated separately.  The adapter--protocol comparison is signed and accompanied by a transaction/log-order class; no universal direction is assumed.  Protocol-to-redemption is an observed holder-action interval and requires right-censoring methods.}
\label{fig:clock_decomposition}
\end{figure}

\subsection{State occupation and capital time}

For a protocol or holder state \(s\), define occupation over \([a,b]\) by
\begin{equation}
  \occupancy_{cs}(a,b)
  =
  \int_a^b \ind{X_c(t)=s}\,dt.
  \label{eq:condition_state_occupation}
\end{equation}
Two conditions can share the same first and last timestamps while spending different time in zero-only, positive-observed, or unresolved holder states.  Occupation time is therefore a separate input to capital-lock and service-cost analysis.

\subsection{Observation grades}

Each clock receives one coordinate-specific grade:
\begin{description}[style=nextline,leftmargin=2.6cm]
\item[$\ObsExact$ --- Exact] Exact canonical event time and deterministic order.
\item[$\ObsInterval$ --- Interval] Endpoint bounded by exact observations but not point-identified.
\item[$\ObsSnapshot$ --- Snapshot] Current or terminal representation observed without transition history.
\item[$\ObsProxy$ --- Proxy] Documented substitute with an explicit relationship to the target coordinate.
\item[$\ObsUnmeasured$ --- Unmeasured] Evidence does not identify the coordinate.
\item[$\ObsConflict$ --- Conflicting] Load-bearing evidence disagrees.
\end{description}

A convenient upstream or downstream timestamp is not substituted for a missing coordinate.  Duration tables include only compatible grades or report interval bounds explicitly.

\section{Data, Population Construction, and Identification}
\label{sec:data_methodology}

The empirical design inherits the frozen adapter-question cohort from Part~I and adds a fixed downstream follow-up boundary.  It does not redefine the population from a mutable application-programming-interface count or from the number of protocol events returned by one provider.

\subsection{Frozen population and follow-up}

Let
\[
  \Cohort
  =
  \{q:\text{adapter-question initialized by block }79{,}721{,}080\}
\]
be the frozen question cohort.  The accepted Part~I denominator contains \(185{,}550\) adapter-questions.  Downstream protocol and redemption events are followed through Polygon block \(90{,}114{,}204\).  Events after the snapshot do not enter the analysis, and questions initialized after the cutoff do not enter the cohort even if their protocol events occur before the follow-up boundary.

This asymmetric design separates cohort formation from follow-up.  It permits late protocol finality and redemption for already frozen questions without allowing the denominator to expand during analysis.

\subsection{Contract-wide acquisition universe and cohort restriction}

The exact target-event scan covers the pinned Conditional Tokens contract over the registered deployment-to-snapshot interval.  That contract-wide event universe serves three purposes: it proves event-family coverage, preserves canonical event identity, and supplies the pool from which exact cohort links are drawn.  It is not itself the study population.

Cohort-specific claims require an exact bridge
\[
\begin{aligned}
  \text{adapter-question}
  &\longrightarrow \text{question identifier}
  \longrightarrow \text{condition identifier}\\
  &\longrightarrow \text{protocol and redemption events}.
\end{aligned}
\]
An event that cannot be linked through this bridge can remain valid protocol evidence while remaining outside the Polymarket cohort analysis.  Conversely, a cohort question with no identified downstream condition remains in the attrition denominator rather than disappearing from the study.  This separation prevents full-contract counts from being misreported as question-level coverage or redemption rates.

\begin{table}[htbp]
\centering
\caption{Separation of the contract-wide acquisition universe from the frozen Polymarket study cohort.}
\label{tab:universe_boundary}
\small
\begin{tabularx}{\textwidth}{L{0.25\textwidth}L{0.29\textwidth}Y}
\toprule
Layer & Unit and authority & Permitted use\\
\midrule
Contract-wide target-event universe & Canonical CTF events selected by ABI-derived topic and exact chain identity over the fixed block interval & Coverage, transport integrity, event semantics, and candidate protocol relations\\
Frozen adapter-question cohort & Adapter-questions initialized by the Part~I cutoff & Scientific denominator for Polymarket mapping and downstream attrition\\
Exact cohort-linked condition set & Conditions linked through protocol identifiers and accepted lineage & Cohort-level protocol-finality, payout, and redemption estimands\\
Reconstruction-only remainder & Valid protocol events not accepted into the frozen cohort relation & Protocol-universe diagnostics; never substituted for cohort rates\\
\bottomrule
\end{tabularx}
\end{table}

\subsection{Nested observation universes}

The paper uses nested universes rather than one denominator for every result.

\begin{table}[htbp]
\centering
\caption{Nested observation universes.  Counts and exclusions are reported separately at every transition; no lower-stage numerator is silently divided by a higher-stage denominator.}
\label{tab:nested_populations}
\small
\begin{tabularx}{\textwidth}{L{0.27\textwidth}L{0.25\textwidth}Y}
\toprule
Universe & Unit & Claims supported\\
\midrule
Frozen question cohort & adapter-question & cohort composition and mapping attrition\\
Exact prepared conditions & condition & registry coverage and condition multiplicity\\
Protocol-final conditions & valid condition-resolution event & payout taxonomy and protocol-finality timing\\
Upstream-linked protocol-final conditions & condition with exact/graded Oracle and adapter links & finality-path decomposition\\
Redemption risk set & protocol-final condition with follow-up & first-redemption and censoring estimands\\
Positive-redemption risk set & protocol-final condition with ordered redemption history & first-positive and zero-to-positive transitions\\
Value-qualified conditions & condition with exact entitlement denominator & entitlement-completion claims only\\
\bottomrule
\end{tabularx}
\end{table}

A lower-stage count is never divided by a higher-stage denominator without displaying the attrition path.  For example, first-redemption survival uses protocol-final conditions with valid follow-up, not all initialized adapter-questions.

\subsection{Authoritative target-event registry}

The target event registry is derived from the pinned ABI and source, not copied from a historical stream label.  For each event it records:

\begin{itemize}
  \item canonical Solidity event signature;
  \item ABI-derived topic hash;
  \item indexed and non-indexed field layout;
  \item contract address and deployment boundary;
  \item ABI and source hashes;
  \item decoder revision.
\end{itemize}

The target streams are condition preparation, condition resolution, and payout redemption.  Any other Conditional Tokens event remains useful protocol evidence but does not enter a target-event count under an incorrect label.

\subsection{Exact coverage and range lineage}

Population-wide event claims require an exact zero-gap range ledger for each target event through the fixed snapshot.  A successful terminal range must record the exact query coordinates, provider, response status, count, parent/child split lineage, page semantics, response hash, and output artifact.  A broad query that returns the provider cap is non-terminal and cannot prove coverage.

The scan uses recursive block splitting for capped responses.  Zero-event ranges remain evidence only when the provider successfully returned a terminal response.  Gaps, overlaps, pending branches, provider changes, and single-block pagination are explicit tables rather than implicit assumptions.

A provider response is transport evidence, not the research event itself.  The analytical layer retains complete log envelopes and canonical identities in compact columnar form.  Ordinary temporary responses can be discarded after verified commit, while a deterministic audit sample and all anomalies remain retained.

\subsection{Diagnostic supersession and evidentiary revision}

Historical diagnostics are not erased when later evidence resolves them.  The evidence record retains the original diagnostic, its cause, the corrective or interpretive disposition, and the later artifact supporting the revision.  A historical failure can therefore remain visible without governing a conclusion once its predicate has been re-evaluated under an explicit normalized schema or stronger evidence record.  This rule prevents both silent deletion of adverse evidence and accidental use of an obsolete diagnostic.

\subsection{Source observations and canonical events}

One source observation is one provider/page occurrence of a log.  One canonical event is one exact on-chain identity:
\[
  (\text{chain},\text{contract},\text{transaction hash},\text{log index}).
\]

The full envelope includes block number and hash, block timestamp, nullable transaction index with a raw-status field, exact log index, exact \code{removed} value, complete ordered topics, complete data, event class, lossless decoded fields, and provenance.  Repeated source observations map many-to-one to canonical events where semantic payloads agree.

A missing transaction index is non-load-bearing and remains null.  A missing log index blocks the affected event until exact receipt or block evidence recovers it.  No event is ordered from provider page position or source row order.

\subsection{Exact question-to-condition mapping}

The canonical mapping relation uses protocol identities only.  For each adapter-question it attempts to identify:

\begin{itemize}
  \item exact question identifier;
  \item condition identifier;
  \item oracle address;
  \item outcome-slot count;
  \item canonical condition-preparation event;
  \item adapter and request lineage;
  \item mapping method and evidence hashes.
\end{itemize}

Title, slug, text similarity, and nearest-time matching are excluded from the canonical relation.  Unmatched and ambiguous questions remain explicit.  A question can have one condition, several historically distinct condition objects, or no identified condition under the frozen evidence; these are different attrition states.

\subsection{Protocol-finality linkage}

For conditions with a valid resolution event, the protocol relation stores ordered payout numerators, normalized class, canonical finality identity, and exact links to upstream oracle and adapter states where available.  Same-transaction paths preserve log order.  Same-block but different-transaction paths preserve transaction identities.  Later-block paths retain the full interval.

The upstream oracle time can be exact, interval-observed, or unavailable depending on the route.  An unavailable oracle endpoint does not prevent measurement of protocol-to-redemption time, but it prevents oracle-to-protocol claims for that condition.

\subsection{Redemption linkage}

Each canonical redemption event is linked to condition identifier, redeemer, collateral token, parent collection, index sets, payout, and protocol-finality state.  The paper distinguishes:

\begin{itemize}
  \item any redemption event;
  \item strictly positive-payout redemption;
  \item zero-payout redemption;
  \item multiple redemption events for one condition;
  \item ordered zero-to-positive histories;
  \item right-censored conditions with no observed redemption by the snapshot.
\end{itemize}

Addresses are pseudonymous execution identifiers, not beneficial owners.  Address-level analyses are therefore stated as address behavior unless an independent identity source exists.

\subsection{Lifecycle cross-classification}

Every condition in the contract-wide analytical table is assigned to one cell of the complete prepared/resolved/redeemed cross-classification.  The resulting eight-cell table is reported separately for the full protocol universe and the frozen cohort.  Each non-canonical path receives an explicit disposition, such as left truncation, pre-deployment state, repeated event, unresolved linkage, quarantine, snapshot censoring, or genuine alternative lifecycle.  Marginal totals alone are insufficient because they can hide resolved-without-observed-preparation or redeemed-without-observed-resolution states whose interpretation depends on the fixed observation boundary and exact event history.

\subsection{Bitemporal provenance}

Each source record stores valid-time fields from the chain and retrieval-time fields from the acquisition process.  Provider response time is never substituted for block time.  Historical stream labels and current user-interface metadata remain provenance attributes rather than event truth.

The data lineage preserves:

\begin{enumerate}
  \item immutable source observation;
  \item canonical event identity;
  \item ABI-derived event class;
  \item exact linkage relation;
  \item analytical table row;
  \item reported table or figure source hash.
\end{enumerate}

\subsection{Validity gates}

\begin{table}[htbp]
\centering
\caption{Cumulative validity gates for Part II.}
\label{tab:validity_gates}
\begin{tabularx}{\textwidth}{L{0.18\textwidth}L{0.32\textwidth}Y}
\toprule
Gate & Requirement & Claims unlocked\\
\midrule
G-ABI & Topic hashes and decoders derived from the pinned ABI & Correct event-family classification\\
G-COVERAGE & Zero-gap/no-unexplained-overlap range ledger through the snapshot & Population-wide event counts for the target stream\\
G-IDENTITY & Exact transaction hash and log index; complete envelope and provenance & Canonical event multiplicity and ordering\\
G-DECODE & Lossless typed ABI decode with raw topics/data retained & Payout-vector and event-field analysis\\
G-LINK & Exact question-to-condition and upstream-state linkage & Question-level finality paths and durations\\
G-TIMING & Valid state order and fixed administrative censoring boundary & Duration, survival, and competing-risk estimates\\
G-VALUE & Exact outstanding entitlement or balance denominator & Value-weighted entitlement-completion claims\\
\bottomrule
\end{tabularx}
\end{table}

The gates are cumulative.  A downstream result cannot repair an upstream failure.  For example, a plausible survival curve cannot establish that the event stream had zero-gap coverage; a complete event stream cannot establish entitlement completion without a value denominator.

\subsection{Missingness and censoring}

Missingness is classified by mechanism rather than represented by one null code.  Categories include:

\begin{itemize}
  \item not applicable because the upstream state never occurred;
  \item right-censored at the fixed snapshot;
  \item exact target event unavailable because coverage failed;
  \item linkage unresolved despite valid events;
  \item interval-observed endpoint;
  \item value denominator unavailable;
  \item conflicting semantic or identity evidence.
\end{itemize}

Administrative censoring is deterministic given the snapshot.  It does not imply that a later redemption never occurs.

\subsection{Value-denominator gate}

The event ledger identifies paid collateral in observed redemption actions.  A value-weighted completion claim additionally requires an exact condition-level entitlement denominator.  Candidate reconstructions must account for splits, merges, burns, wrappers, negative-risk routes, collateral tokens, and historical balances.  Until that relation passes \(\valuegate\), the paper reports absolute observed payouts and explicitly labelled normalizations only.

\subsection{Ethics and disclosure}

The protocol data are public chain records, but the analysis follows minimum-necessary disclosure.  Ranked redemption-address lists are not reported by default.  Raw addresses can remain in reconstruction-only or controlled research layers while reported tables aggregate by condition, route, payout class, or timing stratum.  No beneficial ownership, intent, or legal status is inferred from an address alone.

\section{Estimands, Censoring, and Empirical Design}
\label{sec:estimands_design}

The empirical objects are specified before numerical insertion.  Each estimand has a fixed unit, denominator or risk set, endpoint definition, observation grade, and gate consequence.

\subsection{Population and attrition}

The population flow begins with the frozen question cohort and proceeds through exact condition mapping, protocol finality, and observed redemption.  Let
\[
  N_Q=|\Cohort|
\]
be the question denominator, \(N_C\) the number of exactly linked conditions, \(N_P\) the number of protocol-final conditions, and \(N_R\) the number with at least one observed redemption by \(\tstar\).  These counts answer different questions and are not collapsed into one rate.

The mapping report must distinguish:

\begin{itemize}
  \item exact one-to-one mapping;
  \item exact one-to-many or many-to-one protocol relation;
  \item no preparation event in the target stream;
  \item event present but upstream question linkage unavailable;
  \item conflicting identity or malformed event;
  \item right-censored downstream state.
\end{itemize}

\subsection{Protocol-finality path estimands}

For conditions with compatible upstream and downstream clocks, the primary intervals are
\begin{align}
  L_c^{OP} &= \tresolve(c)-\toracle(c),\\
  L_c^{OA,C} &= \tadapter(c)-\toracle(c),\\
  D_c^{PA} &= \tadapter(c)-\tresolve(c).
\end{align}

The signed adapter--protocol relation is reported with an order class:

\begin{enumerate}
  \item protocol event before adapter-terminal event in the same transaction;
  \item adapter-terminal event before protocol event in the same transaction;
  \item same block, different transactions;
  \item later block;
  \item one endpoint unavailable or interval-observed.
\end{enumerate}

A zero duration is not interpreted as simultaneous internal execution.  It means the available public clock has zero separation while exact event order is preserved where possible.

\subsection{Payout-vector taxonomy}

Let \(\mathcal P_c\) be the raw ordered numerator pattern and \(\pivec_c\) the normalized payout vector.  The taxonomy records both.  For binary conditions the principal normalized classes are:

\begin{align*}
  \text{YES} &: (1,0),\\
  \text{NO} &: (0,1),\\
  \text{NEUTRAL} &: \left(\tfrac12,\tfrac12\right),\\
  \text{OTHER VALID} &: \pivec_c\text{ valid but outside the canonical three},\\
  \text{MALFORMED/CONFLICTING} &: \text{protocol or decoding gate fails}.
\end{align*}

Scale-equivalent raw numerators remain distinct protocol encodings but share one normalized economic class.  Slot count, numerator order, and denominator remain visible in descriptive tables.

\subsection{First-redemption endpoints}

For any-redemption survival, define
\[
  T_c^{R}=\min_{j\in\Redemptions_c}\{t_{cj}-\tresolve(c)\}
\]
when a qualifying event exists.  Otherwise the observation is censored at
\[
  C_c=\tstar-\tresolve(c).
\]
The observed pair is
\[
  \widetilde T_c^{R}=\min(T_c^{R},C_c),
  \qquad
  \delta_c^{R}=\ind{T_c^{R}\leq C_c}.
\]

For positive-payout redemption, define \(T_c^{R+}\) analogously using only events with strictly positive paid collateral.  A prior zero-payout event does not remove the condition from the positive-redemption risk set.

\begin{figure}[htbp]
\centering
\resizebox{0.94\textwidth}{!}{%
\begin{tikzpicture}[
  every node/.style={font=\footnotesize},
  arr/.style={-{Latex[length=2mm]},thick},
  tick/.style={draw=black!70,thick}
]
\draw[arr] (0,0) -- (14.8,0) node[right] {time after protocol finality};
\draw[tick] (1,-0.12)--(1,0.12); \node[above=2mm] at (1,0) {$\tresolve$};
\draw[tick] (5,-0.12)--(5,0.12); \node[above=2mm,align=center] at (5,0) {zero-payout\\redemption};
\draw[tick] (9,-0.12)--(9,0.12); \node[above=2mm,align=center] at (9,0) {positive-payout\\redemption};
\draw[tick] (13,-0.12)--(13,0.12); \node[above=2mm] at (13,0) {$T^\star$};
\draw[seriesblue,very thick] (1,-0.55)--(9,-0.55); \node[left,font=\scriptsize] at (1,-0.55) {positive endpoint};
\draw[black!55,very thick] (1,-1.05)--(5,-1.05); \node[left,font=\scriptsize] at (1,-1.05) {any redemption};
\draw[dashed,black!65,very thick] (1,-1.55)--(13,-1.55); \node[left,font=\scriptsize] at (1,-1.55) {right-censored example};
\node[font=\scriptsize,align=center] at (7,-2.15) {The risk set begins at protocol finality, not at market creation or oracle request.};
\end{tikzpicture}%
}
\caption{Risk-set and endpoint construction.  A zero-payout redemption ends the ``any redemption'' endpoint, but follow-up for the positive-payout endpoint continues until a positive event or the snapshot.  Conditions without a qualifying event by the fixed snapshot are right-censored.}
\label{fig:censoring_riskset}
\end{figure}

\subsection{Ordered redemption histories}

The condition-level holder state follows
\[
  0=\text{no redemption},
  \qquad
  1=\text{zero-only observed},
  \qquad
  2=\text{positive observed}.
\]
Allowed transitions are
\[
  0\rightarrow1,
  \qquad
  0\rightarrow2,
  \qquad
  1\rightarrow2.
\]
State 1 is not a terminal competing cause for state 2.  Aalen--Johansen transition probabilities can summarize the ordered multistate process, while Kaplan--Meier curves estimate time to any redemption and time to first positive redemption separately.

\subsection{Observed payout realization}

The identified condition-level cumulative payout is \(A_c(u)\) from \Cref{eq:observed_payout_curve}.  Aggregate observed payout can be reported as
\[
  A(u)=\sum_{c} A_c(u)
\]
over a stated population.  Acceptable normalizations include:

\begin{itemize}
  \item per protocol-final condition;
  \item per condition with any observed redemption;
  \item by payout-vector class;
  \item by adapter route or metadata stratum;
  \item by observed total payout over the complete follow-up window.
\end{itemize}

Each normalization is labelled.  None is described as entitlement completion unless \(W_c\) is exact and \(\valuegate\) passes.

\subsection{State-occupation and capital-time estimands}

For a state \(s\), the empirical occupation is \(\occupancy_{cs}\) from \Cref{eq:condition_state_occupation}.  Relevant states include:

\begin{itemize}
  \item oracle-final but not protocol-final;
  \item protocol-final with no observed redemption;
  \item zero-only observed;
  \item positive observed.
\end{itemize}

Occupation summaries can be combined with a registered capital-service rate \(\kappa_c^{\mathrm{cap}}(t)\) to form
\begin{equation}
  C_c^{\mathrm{cap}}(a,b)=\int_a^b \kappa_c^{\mathrm{cap}}(t)\,dt,
  \label{eq:capital_time}
\end{equation}
which measures capital-time rather than assuming one flat delay has the same economic value across conditions.

\subsection{Stratification}

Primary descriptive strata are fixed before results:

\begin{itemize}
  \item adapter family and route;
  \item payout-vector class;
  \item ordinary versus disputed/reset/Data Verification Mechanism (DVM) path inherited from Part~I;
  \item event class where exact metadata exists;
  \item clarification and rule-update history;
  \item same-transaction, same-block, and later-block downstream path;
  \item observation grade and metadata-coverage cohort.
\end{itemize}

Metadata-conditioned results display the unmatched residual and adapter composition.  A modern-adapter, high-coverage subset is not generalized to the full cohort without evidence.

\subsection{Uncertainty and inference}

The cohort is a frozen historical population rather than a simple random sample from future Polymarket conditions.  The primary outputs are exact counts, empirical distributions, and right-censored population summaries.  Confidence intervals, when reported, describe estimator uncertainty under stated event-history assumptions and are not automatic superpopulation guarantees.

Duration distributions are heavy-tailed and can contain point masses at zero.  Reports therefore use medians, quantiles, restricted means over prespecified horizons, numbers at risk, and full survival coordinates rather than relying only on arithmetic means.

\subsection{Negative controls}

The empirical build includes negative controls designed to detect false structure:

\begin{enumerate}
  \item shift protocol-finality times across unrelated conditions while preserving marginal distributions;
  \item replace exact condition mapping with a forbidden title-similarity candidate relation and verify that it is excluded from canonical results;
  \item randomize downstream event order within blocks to test dependence on exact log ordering;
  \item compare source-observation counts with canonical event counts to detect transport duplication;
  \item compute a creation-time redemption curve to demonstrate the bias from an incorrect risk-set origin;
  \item treat zero-payout redemption as terminal for positive redemption and quantify the resulting endpoint distortion.
\end{enumerate}

Negative controls are diagnostics, not alternative primary estimators.

\subsection{Claim hierarchy}

The claim hierarchy is cumulative:

\begin{enumerate}
  \item exact event existence and identity;
  \item exact or graded state linkage;
  \item protocol-finality path and duration;
  \item right-censored holder-action timing;
  \item observed payout realization;
  \item entitlement-completion claim, only after the value denominator passes.
\end{enumerate}

A failure at one level limits all stronger downstream claims.

\section{The Economics of Protocol Finality and Observed Redemption}
\label{sec:economics_gap}

The state sequence identified in the preceding sections has an economic interpretation, but the interpretation depends on which actor controls the remaining transition.  Before protocol finality, the mechanism has not yet recorded the enforceable payout rule.  After protocol finality, the claim can be technically redeemable while the holder has not yet exercised the action.  These intervals can both create capital time, but they do not represent the same risk.

\subsection{Cash-equivalent value}

Let \(V_c(t)\) be the executable or accounting value of one unit of the winning claim at time \(t\), expressed in collateral units, and let \(\pi_c^{\mathrm{win}}\) be its final per-unit winning payout under the recorded vector. Define the observed cash-equivalent wedge
\begin{equation}
  \omega_c(t)=\pi_c^{\mathrm{win}}-V_c(t).
  \label{eq:cash_equivalent_wedge}
\end{equation}

Before protocol finality, \(\omega_c(t)\) can include adjudication uncertainty, downstream execution uncertainty, liquidity cost, and discounting.  After protocol finality, adjudication uncertainty should be absent under the ordinary path, but transaction cost, custody delay, batching, collateral access, and holder optionality can remain.  The same numerical wedge can therefore have a different decomposition across states.

A positive post-finality wedge is not automatically protocol failure.  It can be rational compensation for the cost of realizing collateral or an artifact of a stale/illiquid quoted price.  The empirical design treats the wedge as an outcome to explain, not a target that must mechanically be forced to zero.

\subsection{Known-payout carry benchmark}

Suppose protocol finality has fixed a claim payout \(\pi_c\), and a holder expects redemption after random delay \(\Delta_c\). Under a simple collateral carry rate \(r\), transaction cost \(k_c\), and no further default risk, a benchmark present value is
\begin{equation}
  V_c^{\mathrm{bench}}
  =
  \E\!\left[\pi_c^{\mathrm{win}} e^{-r\Delta_c}\right]-k_c.
  \label{eq:known_payout_benchmark}
\end{equation}

This is a benchmark, not a structural pricing model.  It clarifies that a known payout can still have a positive discount when realization is delayed or costly.  More elaborate models can include stochastic borrowing rates, custody risk, segmentation, account-level gas sponsorship, and endogenous redemption timing.

\subsection{Mechanism delay versus holder delay}

Define the mechanism interval
\[
  \Delta_c^{\mathrm{mech}}
  =
  \tresolve(c)-\toracle(c)
\]
and the holder interval
\[
  \Delta_c^{\mathrm{holder}}
  =
  \tredeem(c)-\tresolve(c).
\]

The first interval is controlled by the adapter/protocol path.  The second includes holder or custodian choice after the payout rule exists.  A capital provider can be exposed to one, both, or neither interval depending on the instrument architecture.  For example, a loan extinguished at protocol finality has no principal exposure to later holder delay, while a service promising immediate collateral delivery can retain an operational obligation until redemption.

\subsection{State-occupation value}

A scalar duration can conceal differences in state occupation.  Let \(k_s(t)\) be the per-unit cost rate associated with state \(s\).  The total state-conditioned cost is
\begin{equation}
  C_c(a,b)
  =
  \sum_s
  \int_a^b
  k_s(t)\ind{X_c(t)=s}\,dt.
  \label{eq:state_cost}
\end{equation}

Two conditions can have equal protocol-to-redemption duration but different cost because one spends most of the interval in a zero-only state, one requires multiple redemption transactions, or one is handled through a custodian.  This motivates reporting occupation and route rather than only a single gap.

\subsection{Leverage amplification}

For a financed position of notional \(N\) and trader equity \(E\), a finality discount \(d\) reduces marked equity by approximately \(Nd\) before fees and nonlinear execution effects.  The same per-unit discount therefore has a larger equity effect at higher leverage.  But the economic response depends on the state:

\begin{itemize}
  \item before protocol finality, the discount can reflect unresolved mechanism risk;
  \item after protocol finality, it can reflect realization cost or optional delay;
  \item after debt extinction, it can belong entirely to the holder rather than the lender;
  \item under a guaranteed conversion service, part of the wedge can belong to the operator or reserve.
\end{itemize}

The paper does not infer a universal margin add-on from the observed wedge.  It supplies the state and duration variables needed by Paper~6 to test such rules.

\subsection{Observable economic estimands}

The empirical release can support the following observables without an entitlement denominator:

\paragraph{Oracle-to-protocol delay.}
The distribution of \(L^{OP}\) measures downstream mechanism delay after oracle finality.

\paragraph{Adapter--protocol order.}
The signed \(D^{PA}\) and its exact order class measure the emitted sequence without imposing an artificial direction.

\paragraph{First-redemption survival.}
\(\km(u)\) estimates the share of protocol-final conditions with no observed redemption by horizon \(u\), under the stated censoring rules.

\paragraph{Positive-redemption survival.}
The positive endpoint separates zero-payout cleanup from observed positive collateral realization.

\paragraph{Observed payout curve.}
\(A_c(u)\) and its aggregates measure collateral paid through observed events.

\paragraph{State-price relation.}
Where executable prices exist, the relation between the cash-equivalent wedge and expected remaining protocol or holder duration can be studied after controlling for spread, staleness, event class, route, and payout class.  Such an analysis remains associational.

\subsection{Falsification and interpretation}

A finality-pricing interpretation is weakened if, within high-grade cohorts, the observed wedge has no stable relation to measured remaining duration, state occupation, route, or realization delay after controlling for liquidity and market characteristics.  A null result would be informative: the delay may be too short to price, holders may treat redemption as operationally negligible, or measurement noise may dominate.

A null pricing result would not collapse the state-machine contribution.  Funding, collateral, custody, and conversion feasibility can remain state-dependent even when no visible spot discount appears.  Conversely, a strong observed discount is not by itself proof of oracle or protocol risk.  The claim is conditional on coverage, timing, price, and value gates.

\section{Registered Empirical Design and Analysis Sequence}
\label{sec:empirical_contract}

The analysis is ordered because every downstream result depends on an earlier identity, coverage, or population predicate.  The sequence below is part of the scientific design rather than a software execution diary.  A later table can refine an earlier description, but it cannot repair a failed event-family, coverage, identity, or linkage gate.

\subsection{Target-stream integrity and coverage}

The first analysis verifies the ABI-derived event registry, exact block-range coverage, canonical identities, source duplication, decoding, and deterministic rebuild for \eventcode{ConditionPreparation}, \eventcode{ConditionResolution}, and \eventcode{PayoutRedemption}.  Its outputs distinguish source observations from canonical events and classify every anomaly.  Population rates are unavailable until this layer passes.

\subsection{Cohort linkage and attrition}

The second analysis maps the frozen adapter-question cohort to condition identities.  It reports exact one-to-one, one-to-many, many-to-one, unmatched, ambiguous, and inapplicable states.  Title similarity, fuzzy matching, and nearest-time candidates appear only as negative controls and never enter the accepted relation.  Full-contract event counts remain separate from cohort denominators.

\subsection{Protocol-finality paths}

The third analysis links Oracle-final, adapter-terminal, and condition-resolution events.  It distinguishes same-transaction protocol-before-adapter, same-transaction adapter-before-protocol, same-block different-transaction, later-block, interval-observed, unavailable-upstream, and right-censored paths.  Duration summaries are restricted to compatible clock grades, while exact secondary event order is retained even when wall-clock duration is zero.

\subsection{Payout-vector taxonomy}

The fourth analysis preserves ordered payout numerators, slot counts, denominators, normalized classes, and malformed or conflicting attrition.  Scale-equivalent encodings are separated from economically distinct payout classes.  Composition by adapter and adjudication route is reported without assuming that a human-readable YES/NO label determines slot order.

\subsection{Redemption histories}

The fifth analysis constructs first-any, first-positive, and ordered zero-to-positive histories.  It reports event and risk-set counts, administrative censoring, Kaplan--Meier coordinates, multistate transitions, numbers at risk, quantiles, and restricted means where identified.  A zero-payout redemption terminates the any-redemption endpoint but not the positive-redemption endpoint.  Creation-time placebo and false competing-risk specifications remain negative controls.

\subsection{Observed payout realization}

The sixth analysis aggregates exact collateral paid by observed redemption events and reports explicitly named normalization bases.  Absolute observed payout is identified from the event ledger.  Entitlement-completion ratios remain unavailable unless the separate balance-consistent denominator passes \(\valuegate\).

\subsection{State variables for related analyses}

The analysis identifies state variables relevant to Papers~6, 8, and~9: condition identity, payout class, protocol-finality path, exact event-order keys, first-redemption and first-positive timing, censoring state, and route metadata.  Counterfactual margin, cross-venue comparison, and multi-leg replay remain outside this paper.

\subsection{Version-locked hypotheses}

The empirical analysis evaluates the following hypotheses without retroactive replacement.

\begin{hypothesis}[Downstream separation]
\label{hyp:downstream-separation}
Among conditions with compatible observed clocks, a non-zero share has a strictly positive or interval-qualified Oracle-to-protocol finality gap; exact event-identity non-equivalence is treated as architecture, not as the empirical test.
\end{hypothesis}

\begin{hypothesis}[Atomic ordinary path]
\label{hyp:atomic-ordinary-path}
Among standard ordinary-path conditions, protocol-resolution and adapter-terminal events frequently occur in one transaction, while exact log order remains non-degenerate.
\end{hypothesis}

\begin{hypothesis}[Payout concentration]
\label{hyp:payout-concentration}
Canonical YES and NO payout classes dominate valid binary condition resolutions, with neutral and other valid classes forming a smaller but non-zero tail.
\end{hypothesis}

\begin{hypothesis}[Redemption censoring]
\label{hyp:redemption-censoring}
A material share of protocol-final conditions remains without an observed redemption at the fixed follow-up snapshot.
\end{hypothesis}

\begin{hypothesis}[Positive endpoint delay]
\label{hyp:positive-endpoint-delay}
A non-zero share of resolved conditions satisfies \(T_c^{R+}>T_c^R\); the release reports the share and the distribution of the positive gap rather than testing the pointwise ordering itself.
\end{hypothesis}

\begin{hypothesis}[Route heterogeneity]
\label{hyp:route-heterogeneity}
Protocol-finality and redemption timing differ across adapter and upstream adjudication routes.
\end{hypothesis}

\paragraph{Analytic value boundary.} Theorem~\ref{thm:entitlement_nonidentification} establishes that redemption events alone do not identify entitlement completion. The empirical analysis therefore reports the share of outputs affected by the missing entitlement denominator and keeps G-VALUE blocked unless a separate balance-complete reconstruction is supplied.

A rejected hypothesis remains part of the analysis.  A blocked hypothesis is reported as blocked rather than evaluated on an opportunistic denominator.

\subsection{Result order and claim gating}

Results appear in the following order: target-stream integrity; cohort mapping and attrition; protocol-finality paths; payout taxonomy; first-any and first-positive redemption; ordered zero-to-positive histories; observed payout realization; the value-gate disposition; and the state-variable boundary.  Each table names its population or risk set and its cumulative gate state.

\begin{table}[htbp]
\centering
\caption{Frozen result families and minimum denominator reporting.}
\label{tab:results_contract}
\begin{tabularx}{\textwidth}{L{0.24\textwidth}L{0.29\textwidth}Y}
\toprule
Result family & Required denominator / risk set & Minimum output\\
\midrule
Population flow & frozen questions and exact linked conditions & counts, rates, and exclusion reasons\\
Protocol-finality paths & exact linked upstream terminal states & same-transaction, same-block, delayed, censored/ineligible\\
Payout vectors & valid condition-resolution events & raw patterns, normalized classes, malformed attrition\\
First redemption & protocol-final conditions with follow-up & event count, censoring, quantiles, survival, numbers at risk\\
Positive redemption & protocol-final conditions; prior zero-payout events retained in the history & first-positive survival and ordered zero-to-positive transition counts\\
Observed realization & canonical redemption events and explicitly named normalization base & absolute collateral and labelled observed-only curves\\
Value completion & exact entitlement denominator & reported only if G-VALUE passes\\
\bottomrule
\end{tabularx}
\end{table}

\subsection{Source-generated result layer}

Reported numerical tables and vector figures are generated from frozen inputs.  Their manifests record input hashes, analytical revision, exact denominator, clock origin, gate state, output schema, and deterministic rebuild status.  Inherited Part I constants remain documentary inputs rather than regenerated Part II outputs.

\section{Empirical Results}
\label{sec:results}

\subsection{Population linkage and lifecycle composition}
The exact protocol bridge preserves the distinction between adapter-question instances, unique question IDs, link rows, and conditions. One question ID appears in two valid adapter-question instances, and one mapped question ID links to two CTF conditions; neither multiplicity is collapsed by convenience. \Cref{tab:results_cohort_attrition} reports the registered units.

\begin{table}[htbp]
\centering
\caption{Frozen cohort, identity units, and exact question-to-condition linkage. Counts with different units are shown separately rather than forced into one additive funnel.}
\label{tab:results_cohort_attrition}
\small
\begin{tabularx}{\textwidth}{@{}lY@{}}
\toprule
Registered unit & Count \\
\midrule
Initialized adapter-question instances & 185,550 \\
Unique question IDs & 185,549 \\
Exact one-to-one mapped question IDs & 108,636 \\
One-to-many mapped question IDs & 1 \\
Unmatched unique question IDs & 76,912 \\
Exact question-to-condition link rows & 108,639 \\
Unique linked conditions & 108,638 \\
\bottomrule
\end{tabularx}
\end{table}

Among the 108,638 exact-linked conditions, 99,283 have an observed protocol-resolution event by the fixed snapshot at Polygon block 90,114,204 (2026-07-12T17:11:41Z). The remaining lifecycle states are shown in \Cref{tab:results_lifecycle}. Redemption without an observed resolution is retained as its own observational class rather than silently reassigned.

\begin{table}[htbp]
\centering
\caption{Observed lifecycle states in the exact-linked condition cohort. These are event-observation states, not claims about beneficial ownership or total economic entitlement.}
\label{tab:results_lifecycle}
\small
\begin{tabularx}{\textwidth}{@{}Yrr@{}}
\toprule
State & Conditions & Share \\
\midrule
Positive redemption observed & 91,817 & 84.52\% \\
Zero-only first redemption, no later positive & 341 & 0.31\% \\
Resolved, no redemption by snapshot & 7,125 & 6.56\% \\
Redemption observed without observed resolution & 3,669 & 3.38\% \\
No observed resolution or redemption & 5,686 & 5.23\% \\
\midrule
Total exact-linked conditions & 108,638 & 100.00\% \\
\bottomrule
\end{tabularx}
\end{table}

\subsection{Payout-vector taxonomy}
The corrected payout-vector taxonomy reconciles exactly to the observed resolution counts: 99,283 resolved exact-linked conditions and 1,857,116 resolved conditions in the total target-event union. Binary vectors dominate, while neutral and other valid vectors form a small but non-zero tail. \Cref{tab:results_payout_taxonomy} reports population-qualified counts.

\begin{table}[htbp]
\centering
\caption{Payout-vector taxonomy by registered population. Orientation follows the ordered protocol vector; no human-readable YES/NO label is used to infer slot order.}
\label{tab:results_payout_taxonomy}
\scriptsize
\begin{tabular}{lrrr}
\toprule
Taxonomy & Exact-linked & Full-CTF-only & Total target union \\
\midrule
Canonical binary $(0,1)$ & 53,847 & 1,041,734 & 1,095,581 \\
Canonical binary $(1,0)$ & 45,024 & 686,769 & 731,793 \\
Fifty-fifty & 410 & 29,199 & 29,609 \\
Other valid vector & 2 & 131 & 133 \\
No observed resolution & 9,355 & 177,718 & 187,073 \\
\bottomrule
\end{tabular}
\end{table}

\subsection{Protocol-finality paths}
Cross-contract finality is deliberately conservative. The frozen evidence does not support a universal same-transaction ordering claim, and an identity join is not promoted to causal ordering. The condition-level classification in \Cref{tab:results_protocol_paths} therefore distinguishes interval-qualified, multiple-candidate, incompatible, and unevaluable paths.

\begin{table}[htbp]
\centering
\caption{Conservative condition-level protocol-finality path classification for the exact-linked cohort. The classification does not infer same-transaction order when exact cross-contract transaction/log identity is unavailable.}
\label{tab:results_protocol_paths}
\small
\begin{tabularx}{\textwidth}{@{}Yr@{}}
\toprule
Path class & Conditions \\
\midrule
Interval-qualified terminal generation & 823 \\
Multiple candidate generations & 48 \\
No compatible terminal generation in frozen evidence & 91,638 \\
Right-censored or otherwise unevaluable & 16,129 \\
\midrule
Total & 108,638 \\
\bottomrule
\end{tabularx}
\end{table}

The same-transaction Oracle--adapter--CTF and same-block/same-transaction redemption estimands are explicitly not evaluable from the transferred frozen state layer because the required cross-contract transaction/log identity is not retained at publication grade. No ordering is imputed.

\subsection{Observed redemption timing}
Redemption follow-up uses each condition's first observed protocol resolution as time zero and right-censors at the exact snapshot. In the exact-linked resolved risk set of 99,283 conditions, 92,158 have an observed redemption of any amount and 91,817 have an observed positive-payout redemption. The Kaplan--Meier medians are 182 and 200 seconds respectively; \Cref{tab:results_survival} reports the risk-set accounting.

\begin{table}[htbp]
\centering
\caption{Condition-specific redemption survival summaries for the exact-linked resolved cohort. Time zero is first observed protocol resolution; unresolved conditions are outside this risk set.}
\label{tab:results_survival}
\small
\begin{tabularx}{\textwidth}{@{}L{3.2cm}rrrrrY@{}}
\toprule
Endpoint & Risk set & Events & Censored & Median (s) & P25 (s) & P75 (s) \\
\midrule
First any redemption & 99,283 & 92,158 & 7,125 & 182 & 52 & 544 \\
First positive redemption & 99,283 & 91,817 & 7,466 & 200 & 58 & 720 \\
\bottomrule
\end{tabularx}
\end{table}

Available follow-up is heterogeneous because conditions resolve at different dates. For exact-linked resolved conditions, the 25th, 50th, and 75th percentiles of available follow-up are 21,074,649, 25,254,353, and 33,617,219 seconds, and the maximum is 142,006,845 seconds. These values describe observation exposure, not expected redemption delay.

\subsection{Entitlement boundary}
Observed redemption events identify actions and paid collateral. They do not identify the fraction of all winning entitlement redeemed because the release does not contain a full balance-complete ERC-1155 entitlement ledger. Accordingly, G-VALUE remains unavailable and the paper reports no entitlement-completion rate, outstanding winning-token supply, or unredeemed collateral value. This is a scope boundary of the event-timing study rather than a failure of the event-coverage or censoring results.

\section{Implications for Leveraged Event Markets}
\label{sec:design_implications}

The paper is an empirical protocol study, not a complete leveraged-product specification.  Its contribution to mechanism design is to identify the exact states and clocks that a financed overlay must distinguish.

\subsection{Cash finality and debt maturity}

A lender or margin engine should attach discharge to the state that actually satisfies the obligation.  Oracle finality can be sufficient for a prediction about direction, but not necessarily for a protocol payout.  Protocol finality establishes redeemability, but not observed collateral in a holder account.  Redemption can be optional holder behavior rather than an operator obligation.

\begin{designprinciple}[Finality-state accounting]
\label{dp:finality-state-accounting}
A leveraged event engine should attach each obligation to the exact state required for discharge: oracle-final, adapter-terminal, protocol-final, delivered claim, or redeemed collateral.  A weaker upstream state must not be used as a substitute.
\end{designprinciple}

This principle does not require every liability to remain open until holder redemption.  Debt can be extinguished earlier through a verified sale, conversion, reserve, or other enforceable transition.  The mechanism must state which asset and event satisfy the obligation.

\subsection{Perpetual-to-conditional-token conversion}

Paper~6 develops conversion from a synthetic event-linked position into fully collateralized conditional-token exposure.  Part~II supplies three load-bearing inputs:

\begin{enumerate}
  \item exact condition and outcome-slot structure;
  \item the protocol-final payout state and claim identity;
  \item the observed distinction between claim delivery, redeemability, and redemption.
\end{enumerate}

A conversion rule triggered before exact condition identity or enforceable delivery can replace one unresolved obligation with another.  Conversion after debt extinction can instead leave the user with a fully funded residual claim whose later redemption belongs to the holder rather than the lender.

\subsection{State-triggered margin and finality carry}

Protocol-finality gaps create carrying costs even when the final economic direction is known.  A control rule can condition on
\[
  L^{OP},\quad L^{OA},\quad D^{PA},\quad L^{PR},
\]
but these intervals belong to different actors and risks.  Funding before protocol finality can compensate an unresolved mechanism state.  A charge after protocol finality may instead be a custody, realization, or capital-service charge.  Calling both ``basis funding'' obscures the change in economic function.

Part~II supplies empirical distributions; Paper~6 evaluates candidate margin, funding-freeze, auction, and conversion rules.  This paper does not prescribe one calibration.

\subsection{Asynchronous multi-leg instruments}

An event spread has a protocol-finality and redemption path for each leg.  When leg \(A\) becomes protocol-final before leg \(B\), the spread becomes a deterministic component plus a residual single-leg exposure.  Paper~9 formalizes this state and its collateral consequences.  Part~II supplies leg-level condition-resolution and holder-action clocks.

A common event title or oracle-final timestamp cannot substitute for exact condition and payout identities.  Multi-leg conversion also requires aggregate claim capacity and cannot reuse the same complete-set inventory across positions.

\subsection{Risk-transfer boundary}

Observed redemption delay should not automatically be booked as protocol risk.  The design must separate:

\begin{itemize}
  \item inability to record a valid payout rule;
  \item delay between oracle and protocol finality;
  \item delay between protocol finality and claim delivery;
  \item holder delay after redeemability;
  \item missing entitlement information that prevents completion measurement.
\end{itemize}

This separation avoids charging the protocol for optional holder behavior while retaining the capital-time experienced by a service that guarantees realization.

\subsection{Outcome-sensitive exposure}

A zero-payout redemption can be operational cleanup, while a positive-payout redemption releases collateral.  A leveraged engine should therefore avoid interpreting any redemption event as a uniform reduction in economic exposure.  The relevant quantity is the exact payout, claim inventory, and obligation being discharged.

Similarly, a 50/50 or other valid payout vector changes the identity and quantity of claims that carry value.  Margin and conversion rules must ingest the actual ordered payout vector rather than a hard-coded YES/NO assumption.

\subsection{Design boundary}

The empirical clocks do not establish a welfare-optimal leverage cap, conversion rule, or reserve.  Those choices require execution cost, strategic response, legal enforceability, shared capacity, and participant behavior.  The paper's design output is a verified state surface and claim boundary, not a deployed product recommendation.

\section{Reproducibility and Data Availability}
\label{sec:reproducibility}

The companion materials are organized by adapter, oracle, rules and metadata, protocol target
events, redemption, analysis, coverage, provenance, and manifests.  The protocol layer contains
only ABI-derived events with complete canonical envelopes.  Provider observations remain
provenance and do not inflate canonical event counts.

\subsection{Event and table lineage}

The required lineage is
\[
\begin{aligned}
  \text{source observation}
  &\rightarrow \text{canonical event}
  \rightarrow \text{exact relation}\\
  &\rightarrow \text{analytical table or figure}.
\end{aligned}
\]

Each reported table and figure has a source-manifest entry.  A claim can therefore be traced to
the target-event registry, range ledger, source envelopes, canonical identity, linkage relation,
and analytical code.

\subsection{Deterministic rebuild}

A network-disabled rebuild of the reported tables and figures uses the frozen target-event
dataset and accepted upstream Part~I relations.  The rebuild compares schemas, key sets, semantic
hashes, range coverage, linkage, attrition, survival tables, and figure-source hashes.  Byte
identity is required where deterministic file metadata permits it; semantic identity is required
everywhere.

The check has two components:
\begin{enumerate}
  \item a source-to-canonical verifier for event identity, semantic payload, provenance,
        duplicate classification, and range coverage; and
  \item an analytical verifier for mapping, payout, timing, survival, and observed-realization
        relations.
\end{enumerate}

A content-addressed inventory check does not replace semantic verification.  Both checks are
recorded in the reproducibility materials.

\subsection{Data availability and privacy boundary}

The accompanying public materials describe the analytical tables, schemas, provenance, known
limitations, and citation metadata needed to assess the reported claims.  Redemption addresses
are public pseudonymous identifiers, but the reported material does not provide ranked holder
lists by default.  Tables are aggregated unless row-level disclosure is necessary for a stated
reproducibility purpose and is consistent with minimum-necessary disclosure.  Beneficial
ownership, intent, institutional identity, and legal status are never inferred from an address
alone.

\section{Limitations}
\label{sec:limitations}

The paper has sixteen principal limitations.

\paragraph{Contract and venue scope.}
The protocol layer is Polymarket-, chain-, contract-, and version-specific.  Exact event coverage and ABI decoding establish what the pinned contracts recorded; they do not establish legal finality in every jurisdiction or portability to another venue.

\paragraph{Frozen historical cohort.}
The question cohort ends at block \(79{,}721{,}080\), and follow-up ends at block \(90{,}114{,}204\), whose exact canonical timestamp is 2026-07-12T17:11:41Z.  Later questions and events are outside every rate.  The manuscript date is not the data cutoff.

\paragraph{Question and condition identity.}
The primary upstream unit is an adapter address and question-identifier pair.  It is not automatically one economic market across migrations, wrappers, duplicate listings, or semantically related questions.  Condition mapping is a protocol relation rather than a universal reconciliation of economic identity.

\paragraph{Condition preparation.}
Preparation is a technical registry event.  It does not prove market activity, evidence sufficiency, contractual decidability, or the later use of one specific resolution path.

\paragraph{External semantic clocks.}
Part~I reports that external-source publication and contractual-decidability clocks are unmeasured at population scale.  Part~II therefore does not reinterpret request, oracle, protocol, or redemption times as substitutes for those semantic clocks.

\paragraph{Protocol finality.}
A condition-resolution event is protocol finality under the observed contract.  It is not necessarily the final movement of collateral into a holder's externally controlled account, and it does not prove that every interface or custodian exposes immediate redemption.

\paragraph{Adapter--protocol order.}
The publication-grade frozen state layer does not retain enough cross-contract transaction/log identity to evaluate a universal same-transaction Oracle--adapter--CTF order. Where exact secondary order is unavailable, the paper reports interval-qualified or unevaluable paths rather than inferring an order from equal timestamps or shared identifiers.

\paragraph{Redemption optionality.}
No observed redemption by the snapshot is right censoring, not proof of protocol failure, abandonment, or economic loss.  A holder can choose not to redeem a technically available claim.

\paragraph{Address identity.}
Address-level counts are not beneficial-owner counts.  Multiple addresses can belong to one actor, one address can act for many users, and relayers, custodians, or account-abstraction systems can mediate actions.

\paragraph{Entitlement denominator.}
The event ledger does not generally identify total outstanding winning-token entitlement.  The paper therefore does not estimate true entitlement-completion rates or total unredeemed value unless a separate balance reconstruction passes \(\valuegate\).

\paragraph{Wrappers and collateral routes.}
Negative-risk wrappers, collateral wrappers, migrated adapters, bridges, fees, burns, and off-chain netting can affect balances and realization without appearing as a simple binary CTF flow.

\paragraph{Transport and historical labels.}
Provider transports and historical stream labels are not chain truth.  The publication build mitigates this through ABI-derived topics, full envelopes, range ledgers, canonical identities, receipt-based log-index recovery, provenance, and independent overlap checks.  Residual provider failure remains an operational limitation.

\paragraph{Censoring and selection.}
Protocol-final and redemption-observed conditions are selected.  Survival methods correct risk-set accounting under stated censoring assumptions but do not identify causal effects of route, payout class, metadata, or governance history.

\paragraph{Price and liquidity after finality.}
There is no universal post-protocol order book against which all conditions can be compared.  Some claims stop trading, become stale, or move through wrappers.  Cash-equivalent wedge analysis therefore requires a separate price and liquidity gate.

\paragraph{Leveraged-design boundary.}
State-triggered margin, claim conversion, auctions, reserves, and multi-leg controls are mechanism families studied in Papers~6 and~9.  No result here establishes a production-safe or welfare-optimal leveraged product.

\paragraph{Release boundary.}
Data disclosure boundary. The article reports aggregate lifecycle, payout, redemption, and censoring results; holder-level entitlement reconstruction and raw holder-level records are not distributed.

\paragraph{Payout-vector scope.}
Payout-vector frequencies are reported separately for exact-linked conditions and the full CTF-only population. Contract-wide counts are coverage and protocol-history results; they are not substituted for Polymarket cohort rates.

\section{Conclusion}
\label{sec:conclusion}

Resolution is not one event. On Polymarket, Oracle adjudication, adapter terminality, protocol payout recording, redeemability, and observed holder redemption are distinct states with distinct identities, clocks, actors, and claim boundaries.

The final exact bridge contains 108,638 linked conditions. A protocol-resolution event is observed for 99,283 of them by the exact snapshot at 2026-07-12T17:11:41Z. Among that resolved risk set, 92,158 conditions have an observed redemption of any amount and 91,817 have an observed positive-payout redemption; the corresponding condition-specific Kaplan--Meier medians are 182 and 200 seconds. These are event-timing results conditional on an observed protocol resolution, not estimates of eventual economic entitlement completion.

The corrected payout taxonomy is internally reconciled to protocol-resolution incidence. The exact-linked cohort contains 53,847 canonical $(0,1)$ payout vectors, 45,024 canonical $(1,0)$ vectors, 410 fifty-fifty vectors, two other valid vectors, and 9,355 conditions with no observed resolution. The full CTF universe is retained separately and never substituted for the frozen Polymarket cohort.

Cross-contract finality remains intentionally conservative. The frozen publication layer supports condition-level interval and candidate classifications but not a universal same-transaction Oracle--adapter--protocol ordering claim. This non-evaluability is itself a result of the evidence contract: missing exact cross-contract ordering keys are not repaired by timestamps or identity joins.

Finally, event-complete redemption data do not by themselves identify total winning entitlement. A broader balance-complete reconstruction could in principle supply that denominator, but it is not part of this release. The paper therefore makes no G-VALUE claim about the fraction of all entitlement redeemed, outstanding winning-token supply, or unredeemed collateral value.

The resulting protocol states and clocks are now suitable inputs to the downstream margin, cross-venue, and asynchronous-finality studies, subject to the same population, observation-grade, and claim boundaries fixed here.

\appendix
\section{Operational Entity, Event, State, and Clock Dictionary}
\label{app:dictionary}

\subsection{Entity hierarchy}

\begin{longtable}{@{}L{0.21\textwidth}L{0.27\textwidth}L{0.42\textwidth}@{}}
\caption{Canonical entities and keys.}\label{tab:entity_dictionary}\\
\toprule
Entity & Canonical key & Interpretation\\
\midrule
\endfirsthead
\toprule
Entity & Canonical key & Interpretation\\
\midrule
\endhead
Adapter-question & adapter address + question identifier & frozen upstream population unit inherited from Part~I\\
Oracle request generation & oracle address + request key + generation relation & one technical adjudication request within a question history\\
Condition & canonical condition identifier & CTF registry object with oracle, question identifier, and slot count\\
Source observation & accepted page/range + source row identity & provider occurrence; can repeat another observation\\
Canonical event & chain + contract + transaction hash + log index & one exact on-chain event\\
Redemption event & canonical event identity & one observed redemption action\\
Provenance group & content-addressed group ID & set-valued source-to-event relation for indistinguishable repeated observations\\
Analytical condition row & condition ID + frozen analysis revision & merged protocol-finality and holder-history state\\
\bottomrule
\end{longtable}

\subsection{Core event families}

The publication event registry is generated from the pinned ABI.  The load-bearing target families are:

\begin{itemize}
  \item \textbf{Condition preparation.} Registry relation among oracle, question identifier, slot count, and condition identifier.
  \item \textbf{Condition resolution.} Ordered payout numerators and protocol-finality event.
  \item \textbf{Payout redemption.} Redeemer, collateral, parent collection, condition, index sets, and paid collateral.
\end{itemize}

Transfer, split, merge, wrapper, and other known CTF events can support value reconstruction but do not enter a target-event count under a different label.

\subsection{Clock dictionary}

\begin{longtable}{@{}L{0.19\textwidth}L{0.28\textwidth}L{0.43\textwidth}@{}}
\caption{Protocol and holder clocks.}\label{tab:clock_dictionary}\\
\toprule
Clock & Exact or graded source & Interpretation\\
\midrule
\endfirsthead
\toprule
Clock & Exact or graded source & Interpretation\\
\midrule
\endhead
\(\tprepare\) & condition-preparation event & registry time, not finality\\
\(\toracle\) & accepted Part~I endpoint & irreversible consumable oracle result; exact or interval-observed\\
\(\tconsume\) & adapter call path where identifiable & downstream consumption instant; not always separately emitted\\
\(\tresolve\) & first valid condition-resolution event & protocol payout rule recorded\\
\(\tadapter\) & exact adapter terminal event & observed adapter emission time\\
\(\tredeemable\) & contract-state implication of valid payout recording & technical availability; can coincide with protocol finality\\
\(\tredeem\) & earliest accepted redemption event & first observed holder action of any payout amount\\
\(\tredeempos\) & earliest strictly positive payout redemption & first observed positive collateral realization\\
\(\tstar\) & fixed snapshot block timestamp & administrative censoring boundary\\
\bottomrule
\end{longtable}

\subsection{Condition-level states}

\begin{itemize}
  \item \textbf{Unprepared.} No accepted preparation event in the target stream.
  \item \textbf{Prepared.} Condition registry relation exists; no valid payout rule recorded.
  \item \textbf{ProtocolFinal.} Valid payout numerators recorded.
  \item \textbf{Redeemable.} Positions can be redeemed under the recorded rule, subject to deployed contract/collateral path.
  \item \textbf{NoRedemption.} No accepted redemption observed by current time.
  \item \textbf{ZeroOnlyObserved.} At least one zero-payout redemption and no positive redemption observed yet.
  \item \textbf{PositiveObserved.} At least one strictly positive redemption observed.
  \item \textbf{RightCensored.} Follow-up ends before the relevant holder endpoint is observed.
  \item \textbf{Conflict.} Load-bearing identity or semantic evidence disagrees.
\end{itemize}

\subsection{Ordering and atomicity}

Block timestamp is the primary wall-clock field.  Exact order uses block number and log index, with transaction hash and contract address as deterministic tie-break fields.  A nullable transaction index is retained for validation but is not required for canonical event identity.  Same-transaction events can have zero duration in seconds and a strict exact order.

\subsection{Observation grades}

\begin{description}[style=nextline,leftmargin=2.6cm]
\item[$\ObsExact$ --- Exact] Exact event time and order.
\item[$\ObsInterval$ --- Interval] Endpoint bounded by exact observations but not point-identified.
\item[$\ObsSnapshot$ --- Snapshot] Current or terminal representation observed without transition history.
\item[$\ObsProxy$ --- Proxy] Documented substitute with an explicit relationship to the target coordinate.
\item[$\ObsUnmeasured$ --- Unmeasured] Evidence does not identify the coordinate.
\item[$\ObsConflict$ --- Conflicting] Load-bearing evidence disagrees.
\end{description}

Grades attach to coordinates rather than entire conditions.  A condition can have exact protocol finality and unmeasured entitlement.

\section{Exact Mapping and Canonical Data Schema}
\label{app:mapping_schema}

\subsection{Question-to-condition relation}

The canonical relation has one row per frozen adapter-question and records:

\begin{itemize}
  \item adapter-question identifier and initialization evidence;
  \item exact question identifier used by the condition registry;
  \item condition identifier;
  \item oracle address and outcome-slot count;
  \item condition-preparation canonical identity;
  \item mapping method and evidence hashes;
  \item ambiguity, multiplicity, and exclusion status.
\end{itemize}

Allowed mapping methods are exact protocol identities only.  The release retains unmatched questions and every alternative exact relation rather than selecting by title or time proximity.

\subsection{Source-observation schema}

One source-observation row preserves:

\begin{itemize}
  \item chain and contract;
  \item block number/hash/timestamp;
  \item transaction hash;
  \item raw nullable transaction index and status;
  \item exact log index;
  \item exact \code{removed} value;
  \item complete ordered topics and complete data;
  \item accepted page/range and provider provenance;
  \item raw response/page hashes;
  \item ABI-derived event type and decoder revision;
  \item lossless decoded fields;
  \item canonical identity and semantic payload hashes.
\end{itemize}

A source observation is never silently dropped.  It maps to one canonical event, a provenance group, or an explicit anomaly.

\subsection{Canonical event schema}

One canonical event row contains:

\begin{itemize}
  \item canonical event identity;
  \item full immutable log envelope;
  \item actual ABI-derived topic/event class;
  \item typed decoded fields;
  \item semantic payload hash;
  \item direct and group-level provenance references;
  \item duplicate/conflict classification;
  \item source coverage/range references.
\end{itemize}

\subsection{Condition-resolution relation}

One row per canonical condition-resolution event contains:

\begin{itemize}
  \item condition and question identifiers;
  \item oracle and outcome-slot count;
  \item ordered payout numerators as lossless integer strings;
  \item denominator and normalized class;
  \item protocol-final block, transaction, and log identity;
  \item upstream oracle and adapter linkage where exact;
  \item emission-order class;
  \item source/provenance references.
\end{itemize}

\subsection{Redemption relation}

One row per canonical redemption event contains:

\begin{itemize}
  \item condition identifier;
  \item redeemer and collateral token;
  \item parent collection identifier;
  \item ordered index sets;
  \item exact payout amount;
  \item block, transaction, and log identity;
  \item protocol-finality linkage;
  \item positive/zero payout indicator;
  \item source/provenance references.
\end{itemize}

\subsection{Condition analytical relation}

The condition-level analytical row records:

\begin{itemize}
  \item preparation and protocol-finality states;
  \item payout vector and class;
  \item exact or graded oracle/adapter clocks;
  \item first-any and first-positive redemption endpoints;
  \item censoring time and event indicators;
  \item ordered holder-state transitions;
  \item observed cumulative payout coordinates;
  \item metadata and route strata;
  \item gate states.
\end{itemize}

Addresses remain pseudonymous.  Row-level disclosure requires a review independent of analytical validity.

\section{Estimand, Denominator, and Gate Dictionary}
\label{app:estimands_gates}

\begin{longtable}{@{}L{0.20\textwidth}L{0.28\textwidth}L{0.42\textwidth}@{}}
\caption{Primary estimands and their populations.}\label{tab:estimand_dictionary}\\
\toprule
Estimand & Population / risk set & Interpretation\\
\midrule
\endfirsthead
\toprule
Estimand & Population / risk set & Interpretation\\
\midrule
\endhead
Question-to-condition coverage & frozen question cohort & exact mapping availability, not market completeness\\
Preparation incidence & frozen question cohort & observed condition registry coverage\\
Protocol-finality incidence & exact linked conditions & share with a valid resolution event by snapshot\\
\(L^{OP}\) & conditions with compatible oracle/protocol clocks & oracle-to-protocol mechanism interval\\
\(L^{OA}\) & conditions with compatible oracle/adapter clocks & oracle-to-adapter interval\\
\(D^{PA}\) & conditions with exact adapter/protocol events & signed emission gap plus order class\\
Payout-vector class & valid condition-resolution events & exact protocol payout rule class\\
\(L^{PR}\) & protocol-final conditions with follow-up & time to any observed redemption\\
\(L^{PR+}\) & protocol-final conditions with follow-up & time to first positive-payout redemption\\
Ordered transition probability & protocol-final conditions & no-redemption/zero-only/positive multistate history\\
Observed cumulative payout & canonical redemption events & realized collateral through observed actions\\
Entitlement completion & conditions with exact \(W_c\) & reported only if \(\valuegate\) passes\\
\bottomrule
\end{longtable}

\subsection{Gate consequence matrix}

\begin{longtable}{@{}L{0.16\textwidth}L{0.35\textwidth}L{0.39\textwidth}@{}}
\caption{Gate consequences.}\label{tab:gate_consequences}\\
\toprule
Gate & Pass requirement & Consequence of failure\\
\midrule
\endfirsthead
\toprule
Gate & Pass requirement & Consequence of failure\\
\midrule
\endhead
G-ABI & topic registry and decoder derived from pinned ABI & no target event classification\\
G-COVERAGE & zero-gap/no-unexplained-overlap ledger & no population-wide target-event rate\\
G-IDENTITY & exact transaction hash, log index, full envelope & no canonical multiplicity or order claim\\
G-DECODE & lossless typed decoding with raw envelope retained & no payout/event-field analysis\\
G-LINK & exact question-condition/upstream linkage & no question-level path claim\\
G-TIMING & compatible clock grade and order & no point-duration or survival claim\\
G-PRICE & valid executable price/quote observation & no cash-equivalent wedge claim\\
G-VALUE & exact balance/entitlement denominator & no entitlement-completion claim\\
G-REBUILD & deterministic network-disabled rebuild & no result based on unreproducible analytical outputs\\
\bottomrule
\end{longtable}

\subsection{Claim-strength labels}

Each analytical claim receives one label:

\begin{itemize}
  \item \textbf{Population exact.} Target coverage, identity, decoding, and denominator pass.
  \item \textbf{Audited subset exact.} Events and relations are exact within an explicitly incomplete coverage set.
  \item \textbf{Interval identified.} Endpoint lies between exact lower and upper bounds.
  \item \textbf{Observed holder action.} Exact event, but not universal holder state.
  \item \textbf{Proxy/model-based.} Depends on an explicit proxy or statistical model.
  \item \textbf{Unmeasured.} Evidence does not identify the coordinate.
  \item \textbf{Blocked.} A load-bearing gate fails.
\end{itemize}

A table can contain multiple labels when columns rely on different evidence grades.

\section{Survival and Multistate Protocol}
\label{app:survival}

\subsection{Time origin and censoring}

The time origin is the canonical protocol-finality event.  Conditions without an observed valid protocol-finality event do not enter the redemption risk set.  Follow-up ends at block \(90{,}114{,}204\), whose exact canonical timestamp is 2026-07-12T17:11:41Z.  Administrative censoring is deterministic given the frozen snapshot.

\subsection{Endpoints}

Endpoint A is first observed redemption of any payout amount.  Endpoint B is first observed strictly positive payout.  A prior zero-payout redemption ends Endpoint A but does not censor or preclude Endpoint B; follow-up for Endpoint B continues until a positive payout or the fixed snapshot.

\subsection{Kaplan--Meier outputs}

For each endpoint the release includes:

\begin{itemize}
  \item event and censoring counts;
  \item numbers at risk at prespecified horizons;
  \item Kaplan--Meier coordinates;
  \item median and selected quantiles where identifiable;
  \item restricted mean event-free time over a registered horizon;
  \item route, payout, adapter, and metadata strata.
\end{itemize}

Ties are handled at the available clock resolution.  Same-transaction order is preserved separately from the duration in seconds.

\subsection{Ordered multistate outputs}

The states are no redemption, zero-only observed, and positive observed.  Aalen--Johansen transition probabilities estimate the ordered transitions, including the direct no-redemption-to-positive path.  Zero-only is not treated as an absorbing competing cause for positive redemption.

\subsection{Interval-observed and unavailable clocks}

When protocol finality is exact and redemption is exact, the interval is point-observed.  When an upstream clock is interval-observed, oracle-to-protocol estimates are reported as intervals or excluded from point-duration tables.  Missing clocks are not midpoint-imputed in primary results.

\subsection{Inference boundary}

The cohort is a frozen historical population.  Nonparametric uncertainty intervals summarize the event-history estimator under stated censoring assumptions.  They do not identify causal effects or guarantee transfer to future conditions, other venues, or different fee/custody regimes.

\subsection{Required diagnostics}

The release includes:

\begin{enumerate}
  \item event/censoring balance by route and payout class;
  \item survival curves with numbers at risk;
  \item sensitivity to the positive endpoint definition;
  \item comparison with the incorrect market-creation time origin;
  \item zero-event and same-block timing diagnostics;
  \item common-support diagnostics for stratified comparisons.
\end{enumerate}

\section{Additional Proofs and Formal Audit}
\label{app:proofs}

\begin{lemma}[Protocol-finality ordering]
\label{lem:protocol_order}
Suppose the observed CTF contract accepts a redemption event only after payout numerators for its condition have been recorded.  Then every accepted redemption event satisfies
\[
  t_{cj}\geq\tresolve(c).
\]
\end{lemma}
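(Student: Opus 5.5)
The plan is to turn the contract precondition into a statement about the canonical event order $\kappa$, and then pass from that order to block timestamps. Fix a condition $c$ and an accepted redemption event $j$. By the hypothesis, which is \Cref{ass:pinned-transition-semantics}, the contract executes $j$ only in a state where payout numerators for $c$ are already recorded. On chain, contract state at a given execution point is determined by the transactions and logs that precede it in the canonical order. So some condition-resolution event $e$ for $c$ must occur strictly before $j$ under the order $(b_e,i_e^{\mathrm{log}},h_e,a_e)$. If the redemption and the resolution fall in the same transaction, $e$ has a smaller log index.

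The second step relates $e$ to the protocol-finality event. By \Cref{def:protocol-finality-time}, $\tresolve(c)$ is the timestamp of the \emph{first} valid condition-resolution event for $c$. That first event therefore precedes or equals $e$ in canonical order. This gives $\kappa(e_{\mathrm{first}})\le\kappa(e)<\kappa(j)$, and in particular $b_{e_{\mathrm{first}}}\le b_j$.

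The third step passes from block order to time. Block timestamps are non-decreasing in block number on the canonical chain, and every event inherits its block's timestamp (\Cref{ass:exact-load-bearing-event-identity}). Hence $b_{e_{\mathrm{first}}}\le b_j$ gives $t_{cj}\ge\tresolve(c)$. Equality occurs exactly in the same-block and same-transaction cases, as in \Cref{prop:same_tx_order}.

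I expect the main obstacle to be the first step, namely justifying that "recorded" means "recorded by an earlier canonical event" and not recorded through some path that emits no \code{ConditionResolution} log. I would handle this by reading the hypothesis together with \Cref{ass:pinned-transition-semantics}. The pinned CTF source writes the payout denominator only inside the payout-reporting function, and that function emits \code{ConditionResolution}. So the state precondition checked at redemption implies a prior emitted resolution event for the same condition. Under G-COVERAGE that event is present in the ledger, so the minimum defining $\tresolve(c)$ is taken over a nonempty set. If coverage failed, the redemption would instead fall in the ``redemption observed without observed resolution'' class of \Cref{tab:results_lifecycle}, where the lemma's conclusion is not asserted.
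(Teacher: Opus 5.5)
Your proof is correct and is essentially the paper's argument: the redemption precondition, combined with the ``first valid resolution event'' clause in the definition of $\tresolve(c)$, rules out an accepted redemption before protocol finality. You make explicit the passage from execution (log) order to block timestamps via timestamp monotonicity, which the paper's one-line contradiction leaves implicit; separately, your side remark that equality holds \emph{exactly} in the same-block case would need strictly increasing timestamps rather than the non-decreasing ones you invoke, but the lemma does not depend on it.
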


\begin{proof}
By hypothesis, the redemption transition requires an already recorded payout rule.  The first valid event establishing that rule occurs at \(\tresolve(c)\).  A redemption accepted before that time would violate the transition precondition.  Hence the inequality holds.
\end{proof}

\begin{lemma}[Administrative censoring does not imply failure]
\label{lem:censoring_not_failure}
For a condition resolved before \(\tstar\), absence of a qualifying redemption event by \(\tstar\) identifies right censoring of the first-redemption endpoint, not an infinite redemption time.
\end{lemma}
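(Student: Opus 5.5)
The plan is to argue directly from the definition of the observed pair \((\widetilde T_c^{R},\delta_c^{R})\) and Assumption~\ref{ass:fixed-administrative-censoring}, in the same style as Propositions~\ref{prop:oracle_protocol_nonequiv} and~\ref{prop:protocol_holder_nonequiv}. Fix a condition \(c\) with \(\tresolve(c)<\tstar\), so \(c\) belongs to the redemption risk set with censoring time \(C_c=\tstar-\tresolve(c)>0\). Absence of a qualifying event by \(\tstar\) means exactly that \(T_c^{R}>C_c\) or no event exists at all. In either case \(\delta_c^{R}=0\) and \(\widetilde T_c^{R}=C_c\). The observed data therefore record only the lower bound \(T_c^{R}>C_c\), which is precisely the definition of right censoring at \(C_c\).

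The second step is to show that this lower bound cannot be upgraded to \(T_c^{R}=\infty\). I would do this with a two-history construction. Fix the canonical event ledger through \(\tstar\): the same preparation, the same resolution event with the same payout vector, and no redemption linked to \(c\). In history one, no redemption ever occurs. In history two, a holder redeems at some \(t>\tstar\). This is admissible because Lemma~\ref{lem:protocol_order} only requires \(t\geq\tresolve(c)\). By Assumption~\ref{ass:fixed-administrative-censoring}, events after \(\tstar\) are unobserved, so the two histories generate identical observed data \((\widetilde T_c^{R},\delta_c^{R})=(C_c,0)\). They differ, however, in \(T_c^{R}\), which is finite in one and infinite in the other. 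Hence the observed data do not identify \(T_c^{R}=\infty\). They identify only the event \(\{T_c^{R}>C_c\}\), that is, right censoring and not failure.

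The positive endpoint \(T_c^{R+}\) can be handled by the same argument. A prior zero-payout redemption does not alter this, since the later positive event can again be placed after \(\tstar\).

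The main obstacle is making sure the second history is genuinely admissible, meaning that a post-snapshot redemption is consistent with the pinned transition semantics. I would settle this by appeal to Assumption~\ref{ass:pinned-transition-semantics}: the only precondition for redemption is a previously recorded payout rule, and that rule already exists at \(\tresolve(c)<\tstar\). Nothing in the recorded state therefore forbids a later redemption.
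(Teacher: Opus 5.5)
Your proposal is correct and follows the paper's own argument. The paper simply notes that the data contain no events after \(\tstar\), that a post-snapshot redemption is consistent with the observed history, and hence that the event time is known only to exceed the follow-up interval; your observed pair \((\widetilde T_c^{R},\delta_c^{R})=(C_c,0)\) and two-history construction make exactly this explicit. One small imprecision: Assumption~\ref{ass:pinned-transition-semantics} gives a recorded payout rule as a \emph{necessary} precondition for redemption, not the only one, so your second history is better justified, as in the paper, by the fact that nothing in the observed ledger through \(\tstar\) excludes a later redemption, and your aside on \(T_c^{R+}\) additionally needs that history to leave a winning position outstanding at \(\tstar\), which the redemption ledger alone does not rule out.
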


\begin{proof}
The data contain no event after the fixed observation boundary.  A redemption may occur after \(\tstar\) without contradicting the observed history.  Therefore the event time is known only to exceed the available follow-up interval.
\end{proof}

\begin{proposition}[Risk-set origin matters]
\label{prop:risk_set_origin}
A survival curve measured from market creation generally differs from first-redemption survival measured from protocol finality, even when both use the same redemption events.
\end{proposition}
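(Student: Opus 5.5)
The plan is to make precise what the two survival curves estimate and then show they coincide only in a degenerate case, so that ``generally differs'' follows from a counterexample or a nondegeneracy condition. I will write \(t_{\mathrm{create}}(c)\) for the market-creation time and set the creation-to-finality lag \(G_c=\tresolve(c)-t_{\mathrm{create}}(c)\geq 0\). The two event times are then \(T_c^{R,\mathrm{cr}}=G_c+T_c^{R}\) and \(T_c^{R}\), where \(T_c^{R}\) is the protocol-origin first-redemption time defined above. The nonnegativity of \(T_c^R\) comes from Lemma~\ref{lem:protocol_order}. Each time is censored at \(\tstar\) minus its own origin, and I will invoke Lemma~\ref{lem:censoring_not_failure} to justify treating unobserved redemptions as right-censored under both origins rather than as failures. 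The key observation is that both curves use the same redemption events, but the event times are shifted by the condition-specific amount \(G_c\).

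The first step is the population-level comparison. Pathwise \(T_c^{R,\mathrm{cr}}\geq T_c^{R}\), so the creation-origin survival function \(S^{\mathrm{cr}}(u)=\Pr(G+T^R>u)\) stochastically dominates \(S(u)=\Pr(T^R>u)\). Suppose \(\Pr(G>0)>0\) and \(T^R<\infty\) with positive probability. Then some \(u\) satisfies \(\Pr(T^R\leq u< G+T^R)>0\), and hence \(S^{\mathrm{cr}}(u)>S(u)\). Equality for all \(u\) therefore forces \(G=0\) almost surely, meaning creation and protocol finality coincide. This is exactly the degenerate case, and it is excluded in the setting of the paper because preparation can precede finality by a long interval.

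The second step transfers the comparison to the estimators. The Kaplan--Meier estimator under independent censoring is consistent for each curve separately. So whenever the limits differ, the estimated curves differ with probability tending to one. As a concrete finite illustration, one condition with \(G_c>0\) and an observed redemption already produces distinct step locations in the two curves.

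The main obstacle is censoring compatibility. Under the creation origin the censoring time \(\tstar-t_{\mathrm{create}}(c)\) is longer, but it is entangled with \(G_c\), so the independent-censoring assumption may hold for one origin and fail for the other. I would therefore state the result at the level of the target survival functions. Estimator divergence would be added only under an independent-censoring assumption for each origin, which keeps ``generally'' honest: the two curves agree only when \(G_c=0\) almost surely.
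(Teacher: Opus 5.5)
Your proof is correct and follows the same route as the paper. The paper also writes the creation-origin duration as the protocol-origin duration plus the lag \(\tresolve(c)-t_0(c)\) and concludes that the two distributions need not coincide; your stochastic-dominance argument makes that conclusion strict. One small tightening: the nondegeneracy condition must be the joint one, \(\Pr(G>0,\ T^R<\infty)>0\), not the two separate positivity conditions, because a positive lag confined to never-redeemed conditions leaves both target curves unchanged. Accordingly, equality of the curves forces \(G=0\) almost surely only on \(\{T^R<\infty\}\).
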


\begin{proof}
Let \(t_0(c)\) be market creation.  The two durations are \(\tredeem(c)-t_0(c)\) and \(\tredeem(c)-\tresolve(c)\).  Their difference is \(\tresolve(c)-t_0(c)\), which varies across conditions and includes time during which redemption is not yet admissible.  The distributions therefore need not coincide and answer different questions.
\end{proof}

\begin{proposition}[Positive-redemption follow-up continues after zero payout]
\label{prop:zero_not_terminal_positive}
A zero-payout redemption before \(\tstar\) does not identify failure of a later positive-payout redemption for the same condition.
\end{proposition}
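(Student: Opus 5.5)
The argument follows the pattern of Propositions~\ref{prop:protocol_holder_nonequiv} and~\ref{thm:entitlement_nonidentification}: I will construct two admissible histories that agree on everything observed up to and including the zero-payout redemption, but differ in whether a positive-payout redemption follows. This shows that the observed zero-payout event does not determine the positive endpoint.

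Fix a condition \(c\) with a valid resolution event at \(\tresolve(c)<\tstar\). By Lemma~\ref{lem:protocol_order}, redemptions are admissible after \(\tresolve(c)\). Take a payout vector with at least one positive and one zero coordinate, such as a canonical \((1,0)\) vector. Let a holder of the losing position redeem at some \(t_1\in[\tresolve(c),\tstar)\). The event is accepted, and its paid collateral is \(a_{c1}=0\).

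The two histories are as follows.
\begin{itemize}
\item In the first history, a holder of the winning position, who may be the same or a different address, redeems at some \(t_2\in(t_1,\tstar]\) with \(a_{c2}>0\).
\item In the second history, no further redemption occurs by \(\tstar\).
\end{itemize}
Both histories are consistent with Assumption~\ref{ass:pinned-transition-semantics}. A zero-payout redemption only burns the redeemed position. It does not record a new payout rule and does not alter other holders' winning balances, so the winning redemption remains admissible in the first history.

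Up to time \(t_1\), the two histories have identical \(H_c(t)\), and both reach the state ZeroOnlyObserved. In the first history \(\tredeempos(c)=t_2\) and the condition moves to PositiveObserved. In the second, the positive endpoint is right-censored at \(\tstar\). By Lemma~\ref{lem:censoring_not_failure}, this censoring does not identify failure either.

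The zero-payout event therefore determines neither the occurrence nor the absence of a later positive redemption. Treating state 1 as absorbing, that is, as a competing cause, would misclassify the first history.

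The main obstacle is justifying admissibility: a zero-payout redemption must not exhaust or invalidate the winning claims. I would argue this from the CTF redemption semantics, where payout is computed per index set from the recorded numerators and the redeemer's own balance. A zero-payout redemption of a losing index set leaves winning positions untouched. If one prefers a single-holder example, it suffices that the holder held both positions and redeemed only the losing index set first.
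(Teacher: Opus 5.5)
Your proposal is correct and follows essentially the same route as the paper. The paper's proof observes that a zero-payout redemption burns only a losing (or otherwise zero-valued) position, so a winning position can remain outstanding and another holder can later redeem it for positive collateral; your explicit pair of histories (one with a later positive redemption, one censored at \(\tstar\)) is a more formal packaging of that observation. One small nit: Lemma~\ref{lem:protocol_order} gives only the necessary condition \(t_{cj}\geq\tresolve(c)\), not admissibility after resolution, so that step should rest on the post-resolution CTF redemption semantics you invoke at the end.
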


\begin{proof}
A condition can have several position classes and holders.  A zero-payout event can burn a losing or otherwise zero-valued position while a winning position remains outstanding.  Another holder can later redeem for positive collateral.  Therefore the zero event does not preclude the positive endpoint.
\end{proof}

\begin{proposition}[Event coverage and entitlement coverage are logically independent]
\label{prop:event_value_independence}
Zero-gap coverage of all condition-resolution and redemption events does not imply complete entitlement coverage, and complete entitlement balances do not imply event-range coverage.
\end{proposition}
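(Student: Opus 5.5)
Both directions are non-implications, so I would prove each by exhibiting a pair of admissible histories that agree on the premise's observable content and differ on the conclusion's. For the first direction I would reuse the argument of Theorem~\ref{thm:entitlement_nonidentification} directly. Fix a condition \(c\) with a fixed ordered payout vector \(\Yvec_c\) and let \(H\) be a history in which every \code{ConditionResolution} and \code{PayoutRedemption} event in the registered block interval is observed, so G-COVERAGE passes for both target streams. Construct \(H'\) that agrees with \(H\) on exactly those events (identities, payouts, timing) but differs in non-target balance-changing activity: an additional complete-set split before \(\tresolve(c)\) whose winning position is never redeemed by \(\tstar\). The event ledger restricted to the target families is identical, so zero-gap event coverage holds equally in both, while \(W_c\) differs. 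Hence event coverage cannot certify the entitlement denominator, i.e.\ it does not imply entitlement coverage.

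For the converse I would take a history in which the complete ERC-1155 balance state at \(\tresolve(c)\), and hence \(W_c\), is known exactly (for instance from a trusted state snapshot at that block), while the range ledger for the redemption stream has a gap: some block subrange after \(\tresolve(c)\) was never queried to a terminal response, or returned only a provider-capped response. Two histories agreeing on the balances at finality can then differ in whether a redemption event lies in the uncovered range, since the snapshot fixes stock at one instant but not the flow of events over the interval. So knowing \(W_c\) does not establish G-COVERAGE, which is defined by the range ledger and not by state. I would also point out that balances at a single time cannot even recover the order keys \(\kappa(e)\) needed for G-IDENTITY.

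The step I expect to be the main obstacle is checking that the constructed histories are admissible under Assumption~\ref{ass:pinned-transition-semantics}: the extra split in \(H'\) has to respect collateral conservation, and the redemption inside the gap must be preceded by a recorded payout rule (Lemma~\ref{lem:protocol_order}). I would handle this with the simplest choices, namely a full complete-set split with matching collateral deposit and a redemption placed strictly after \(\tresolve(c)\). Both are permitted by the pinned contract, so the two counterexample pairs go through and the logical independence follows.
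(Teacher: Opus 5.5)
Your proposal is correct and follows the same route as the paper's proof. The paper argues each direction in one line: an event-complete ledger omits unredeemed balances, so \(W_c\) stays unidentified, and a balance snapshot leaves historical redemption events and their timing unobserved. Your two witness pairs make exactly these observations concrete. The first is an extra never-redeemed complete-set split, which no target event family records. The second is a redemption placed in an uncovered post-finality range, which leaves the finality-time balances unchanged. You also supply the admissibility checks that the paper leaves implicit.
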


\begin{proof}
For the first direction, an event-complete ledger can omit unredeemed balances, so \(W_c\) remains unidentified.  For the second, a balance snapshot can identify outstanding positions while historical redemption events and their timing remain missing.  Neither evidence set logically contains the other.
\end{proof}

\begin{proposition}[Canonical multiplicity under repeated transport]
\label{prop:canonical_multiplicity}
Let \(S\geq1\) provider observations share one canonical identity and one semantic payload, and let exact chain evidence contain one corresponding event.  Then the canonical event count is one, the provenance count is \(S\), and the number of duplicate source appearances is \(S-1\).
\end{proposition}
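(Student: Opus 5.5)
The plan is to build on \Cref{prop:source_duplication}, which already establishes that one canonical identity and one semantic payload, backed by a single corresponding log in the chain evidence, give exactly one canonical event carrying several provenance relations. What remains is to make the three counts explicit. That requires fixing what each count measures: the canonical event count is the number of distinct canonical identities $(\text{chain},\text{contract},\text{transaction hash},\text{log index})$ that are accepted into the canonical layer. The provenance count is the number of source-to-event relations. The duplicate-appearance count is the number of source observations beyond the first that map to an already-represented canonical event.

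First, I would show the canonical count is one. The $S$ observations share one canonical identity, so the map from observations to identities has an image of size one. The semantic payloads agree, so no conflict class is triggered and there is no reason to split the identity into competing events. The exact chain evidence contains one corresponding log, so the identity is not fictitious and the image is not empty. Since $S\geq1$, the accepted canonical set has cardinality exactly one.

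Second, I would show the provenance count is $S$. The schema in \Cref{app:mapping_schema} says a source observation is never silently dropped: each maps to a canonical event, a provenance group, or an explicit anomaly. None of the $S$ observations is an anomaly, because identity and payload agree. Each therefore contributes exactly one provenance relation to the single event. Distinct observations are distinct provider/page occurrences, so these relations are distinct, and the count is $S$.

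Third, the duplicate-appearance count follows from subtraction. The $S$ appearances, minus the single appearance that is needed to witness the one canonical event, leave $S-1$ duplicates. This count is zero exactly when $S=1$.

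The main obstacle is definitional rather than mathematical. The argument needs "duplicate source appearance" to mean appearances in excess of canonical multiplicity, not an arbitrarily designated representative row. I would state this convention explicitly, noting that the count $S-1$ does not depend on which observation is treated as primary. I would also point out that the single-log hypothesis rules out the alternative reading in which the $S$ observations are genuinely distinct chain events that happen to share a payload.
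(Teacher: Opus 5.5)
Your proposal is correct and uses the same accounting argument as the paper. The shared chain identity, backed by the one supporting log, fixes a single canonical event; each of the \(S\) provider occurrences is retained as its own provenance relation; and counting occurrences as events would overstate multiplicity by exactly \(S-1\). You only make explicit the count definitions and the link to Proposition~\ref{prop:source_duplication}, which the paper's short proof leaves implicit.
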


\begin{proof}
The chain identity establishes one event.  Every provider occurrence remains a distinct observation/provenance relation.  Counting all occurrences as events would overstate chain multiplicity by \(S-1\); discarding them would lose provenance.  The stated accounting preserves both quantities.
\end{proof}

\subsection{Formal-claim boundary}

The formal results establish identification and accounting properties conditional on the pinned transition system.  They do not prove:

\begin{itemize}
  \item that every historical condition used one standard adapter branch;
  \item that holder redemption is economically optimal or immediate;
  \item that a particular leverage rule is welfare optimal;
  \item that protocol finality has the same legal effect in every jurisdiction;
  \item that address-level activity identifies beneficial owners.
\end{itemize}

Those questions require empirical, legal, or mechanism-design evidence beyond the propositions.

\subsection{Proof inventory}

The article contains direct proofs for:

\begin{enumerate}
  \item same-transaction order preservation;
  \item oracle/protocol non-equivalence;
  \item protocol/holder non-equivalence;
  \item scalar resolution timestamp insufficiency;
  \item payout-vector scale invariance;
  \item redemption-event insufficiency for entitlement completion;
  \item source-duplication accounting;
  \item protocol-finality ordering;
  \item censoring interpretation;
  \item risk-set origin;
  \item zero-to-positive endpoint logic;
  \item event/value coverage independence;
  \item canonical multiplicity.
\end{enumerate}

\section{Reproducibility and Evidence Lineage}
\label{app:reproducibility_lineage}

\subsection{Analytical artifacts}

The empirical tables and figures are derived from a fixed set of analytical relations:

\begin{enumerate}
  \item the question-to-condition mapping and attrition relation;
  \item protocol-finality paths and exact event-order keys;
  \item the ordered payout-vector taxonomy;
  \item first-redemption, first-positive-redemption, and multistate histories;
  \item observed payout-realization and gate-attrition tables; and
  \item source relations for figures, tables, and alternative text.
\end{enumerate}

These relations preserve denominator composition, clock definitions, and the evidence class of
each reported quantity.  Numerical values are generated from frozen inputs rather than entered
individually into tables.

\subsection{Event identity and coverage}

The target event family is derived from the pinned ABI rather than from stream labels.  A
canonical event is identified by
\[
  (\text{chain},\text{contract},\text{transaction hash},\text{log index}).
\]
The retained envelope includes block identity, timestamp, ordered topics, data, decoded fields,
and provenance.  A nullable transaction index is auxiliary; exact log index remains
load-bearing for event order.

Population-wide target-event claims require a zero-gap range ledger over the registered
deployment-to-snapshot interval.  Recursive ranges are inclusive, non-overlapping, and linked
to their parents.  A provider response is transport evidence rather than the event itself:
duplicate source observations remain provenance when they map to the same canonical event.

\subsection{Lineage and reproducibility}

The analytical lineage is
\[
\text{source observation}
\rightarrow \text{canonical event}
\rightarrow \text{exact relation}
\rightarrow \text{analytical table or figure}.
\]
For each reported table or figure, the lineage records its input relations, schema, key set,
denominator, clock origin, and content hash.  A network-disabled rebuild compares event identity,
semantic payloads, coverage, linkage, attrition, survival relations, and figure-source inputs.
Where deterministic file metadata permits, byte identity is checked; otherwise the check is
semantic.

\subsection{Evidence and disclosure boundary}

The scientific record distinguishes public aggregate tables, reconstruction relations needed to
reproduce the stated claims, and retained anomaly evidence.  Public blockchain visibility does
not by itself identify beneficial ownership, intent, institutional identity, or legal status.
Accordingly, reported tables aggregate redemption activity by condition, route, payout class, or
timing stratum unless row-level disclosure is necessary for the stated reproducibility purpose.

The data and code availability statement identifies the materials needed to assess the reported
claims.  Absence of an exact entitlement denominator remains a scientific limitation: observed
redemption events identify realized actions and paid collateral, not universal entitlement
completion.

\section{Version-Locked Hypotheses and Interpretation Boundaries}
\label{app:hypotheses_disposition}

\subsection{Hypothesis disposition rule}

The seven hypotheses in \Cref{sec:empirical_contract} are retained whether they pass, fail, or
remain blocked.  For each hypothesis, the analysis reports its statement, required population
and gates, result or uncertainty summary, disposition, and permitted claim language.  A failed
threshold is not replaced after inspection, and a blocked hypothesis is not evaluated on an
opportunistic denominator.

\subsection{Interpretation conditions}

The formal units, estimands, time origins, event taxonomy, and entitlement boundary are fixed by
the analytical design.  Population-wide interpretations require the exact target-event dataset to
satisfy semantic and zero-gap coverage requirements.  Timing and survival interpretations require
the relevant compatible clocks and risk sets.  Reported tables and figures are generated from
frozen analytical relations and assessed by the reproducibility checks described in
\Cref{app:reproducibility_lineage}.

The absence of an exact balance-consistent entitlement denominator leaves G-VALUE blocked.  It
does not prevent reporting protocol-finality, payout-vector, censoring, or observed-redemption
quantities within their stated populations and evidence classes.

\section*{Data and Code Availability}
The article reports aggregate lifecycle, payout, redemption, and censoring results; holder-level entitlement reconstruction and raw holder-level records are not distributed.

\section*{Generative AI Disclosure}
OpenAI ChatGPT and Codex were used for editorial and technical assistance during manuscript preparation. The author made all substantive research decisions, reviewed the final manuscript, and assumes full responsibility for its contents.

\section*{Funding}
This research received no external funding.

\section*{Competing Interests}
The author is affiliated with the Research Department of Devnull FZCO and leads the ForesightFlow research programme. No external sponsor influenced the research design, analysis, interpretation, or decision to publish. The article does not evaluate a commercial product or make investment recommendations.

\section*{Acknowledgments}
This work was completed within the ForesightFlow research programme. All remaining errors are the author's responsibility.

\printbibliography[heading=bibintoc,title={References}]
\end{document}